 \g@addto@macro\normalsize{%
 	\setlength\abovedisplayskip{8pt}
 	\setlength\belowdisplayskip{8pt}
 	\setlength\abovedisplayshortskip{8pt}
 	\setlength\belowdisplayshortskip{8pt}
 }
\newcommand{\blind}{1}
\newtheorem{theoremm}{Theorem}
\newcommand{\E}{\operatorname{E}}
\newcommand{\var}{\operatorname{Var}}
\newcommand{\mse}{\operatorname{MSE}}
\newcommand{\diag}{\operatorname{diag}}
\newcommand{\bb}{\mathbf{B}}
\newcommand{\bc}{\mathbf{C}}
\newcommand{\bk}{\mathbf{K}}
\newcommand{\bv}{\mathbf{V}}
\newcommand{\ssb}{\mathbf{b}} 
\newcommand{\su}{\mathbf{u}}
\newcommand{\sv}{\mathbf{v}}
\newcommand{\sx}{\mathbf{x}}
\newcommand{\sz}{\mathbf{z}}
\newcommand{\btheta}{\boldsymbol{\theta}}
\newcommand{\bomega}{\boldsymbol{\omega}}
\newcommand{\bbeta}{\boldsymbol{\beta}}
\newcommand{\bzero}{\boldsymbol{0}}
\newcommand{\bone}{\boldsymbol{1}}
\newcommand{\bxi}{\boldsymbol{\xi}}
\newcommand{\hbomega}{\hat{\bomega}}
\newcommand{\hmse}{\widehat{\mse}}
\newcommand{\dbeta}{\dot{\beta}}
\newcommand{\dsiga}{{\dot{\sigma}_{\alpha}}}
\newcommand{\dsige}{{\dot{\sigma}_e}}
\newcommand{\yij}{y_{ij}}
\newcommand{\eij}{e_{ij}}
\newcommand{\sumig}{\sum_{i=1}^g}
\newcommand{\hsige}{{{\hat{\sigma}_e}}}
\newcommand{\sige}{\sigma_e}
\newcommand{\siga}{\sigma_{\alpha}}
\numberwithin{equation}{section}
\begin{document}
\def\spacingset#1{\renewcommand{\baselinestretch}%
	{#1}\small\normalsize} \spacingset{1}

\if1\blind
{
\title{Small Area Estimation using EBLUPs under the Nested Error Regression Model}
\author{ Ziyang Lyu  
	\thanks{Ziyang Lyu is a  postdoctoral research fellow  in 
		UNSW Data Science Hub, School of Mathematics and Statisics,University of New South Wales, NSW, 2033, Australia.  Email: \textsf{lvziyang08@gmail.com}. A.H.Welsh is E.J. Hannan Professor of Statistics in the Research School of Finance, Actuarial Studies and Statistics, Australian National University,  ACT, 2601, Australia.  Email: \textsf {Alan.Welsh@anu.edu.au}.}
\hspace{.2cm} \\
UNSW Data Science Hub, 
School of Mathematics and Statisics
\\University of New South Wales\\
A.H. Welsh  \\
Research School of Finance, Actuarial Studies and Statistics, \\  Australian National University}

	\maketitle
} \fi

\if0\blind
{
	\bigskip
	\bigskip
	\bigskip
	\begin{center}
		{\LARGE\bf Small Area Estimation using EBLUPs under the Nested Error Regression Model }
	\end{center}
	\medskip
} \fi

	\bigskip
	\begin{abstract}
Estimating characteristics of domains (referred to as small areas) within a population from sample surveys of the population is an important problem in survey statistics.  In this paper, we consider model-based small area estimation under the nested error regression model.  We discuss the construction of mixed model estimators (empirical best linear unbiased predictors, EBLUPs) of small area means and the conditional linear predictors of small area means.  Under the asymptotic framework of increasing numbers of small areas and increasing numbers of units in each area, we establish asymptotic linearity results and central limit theorems for these estimators which allow us to establish asymptotic equivalences between estimators,  approximate their sampling distributions, obtain simple expressions for and construct simple estimators of their asymptotic mean squared errors, and justify asymptotic prediction intervals.   We present model-based simulations that show that in quite small, finite samples, our mean squared error estimator performs as well or better  than the widely-used \cite{prasad1990estimation} estimator and is much simpler, so is easier to interpret.   We also carry out a design-based simulation using real data on consumer expenditure on fresh milk products to explore the design-based properties of the mixed model estimators.  We explain and interpret some surprising simulation results through analysis of the population and further design-based simulations.  The simulations highlight important differences between the model- and design-based properties of mixed model estimators in small area estimation.
	\end{abstract}
	
	\noindent%
	{\it Keywords:} increasing area size asymptotics, indirect estimator, mean squared error estimation, mixed model estimator, model-based prediction, prediction intervals
	\vfill
	
 \newpage
 \spacingset{1.8} 
	\section{Introduction}

Estimates of area-level characteristics of interest (such as means, totals, and quantiles) for areas, domains or clusters within a population (all intended to be included whenever we refer to areas) obtained from sample survey data are widely used for resource allocation in social, education and environmental programs, and as the basis for commercial decisions.  Direct estimates which use only data specific to an area, can have large standard errors because of relatively small area-specific sample sizes.  Small area estimation is concerned with producing more reliable estimates with valid measures of uncertainty for the characteristics of interest; recent reviews have been given by for example \cite{rao2005inferential,rao2008some}, \cite{lehtonen2009design},  \cite{pfeffermann2013new}, and \cite{rao2014small}.   

A popular method for obtaining reliable estimates  \citep{fay1979, battese1988error} is to introduce a mixed model for the population which includes fixed effects (to describe either unit-level or area-level effects) and random effects (to capture additional between area variation), fit the model using data from multiple areas and then, use the fitted model to construct the desired estimates.  When we have unit-level data, a simple and widely used model is the nested error regression or random intercept model \citep{battese1988error}, and a widely used method for estimating means or totals is to use empirical best linear unbiased predictors (EBLUPs) obtained by minimising the (prediction) mean squared error and then estimating the unknown quantities by maximum likelihood or restricted maximum likelihood (REML) estimation; see for example \cite{saei2003a,saei2003},  \cite{jiang2006estimation} and \cite{haslett2019}. In addition to the issue of the level at which the data are available, there is also a subtlety about the target of estimation.  The most commonly studied characteristics of interest are small area means (or equivalently totals) and, when we assume a mixed model, the conditional expectations of the small area means given the random effects.  These two targets, the small area means and their conditional expectations are different and have different EBLUPs with potentially different mean squared errors, but are treated as interchangeable in small area estimation.  They are both random variables under the model-based framework, so technically they need to be predicted rather than estimated.  However, it is common to use both ``prediction'' and ``estimation'' in small area estimation and we refer to their EBLUPs as mixed model estimators that are distinguished by their different targets  \citep{tzavidis2010robust}; they can also be described as composite and synthetic estimators respectively.

The model-based variability of small area estimators is usually described by reporting estimates of their (prediction) mean squared errors or by prediction intervals, often based on these estimates.  Estimation of (prediction) mean squared errors for mixed model estimators is complicated, even for simple linear mixed models like the nested error regression model:  estimates of (prediction) mean squared error for mixed model estimators based on treating the variance parameters as known (i.e. for the BLUPs rather than the EBLUPs) are underestimates when linear mixed models are fitted to real data; and simple, analytic expressions for the (prediction) mean squared errors of mixed model estimators are not available, complicating their estimation.  Under normal linear mixed models (including the nested error regression model), when the number of areas is allowed to increase while the area sizes are held fixed (or bounded), \cite{kackar1981unbiasedness} and \cite{prasad1990estimation} used second order Taylor expansions to obtain approximations to the (prediction) mean squared error of the EBLUPs of the conditional expectation of the small area means, and then constructed mean squared error estimators by replacing the unknown quantities in these approximations by estimators.  The Prasad-Rao approximation and estimator have been extended to more general models and to allow additional estimators of the model parameters by \cite{datta2000unified} and \cite{das2004mean};  see also \cite{zadlo2009} and \cite{torabi2013estimation}.  Alternatives to estimators based on analytic approximations include estimators obtained using resampling methods.  \cite{jiang2002unified} proposed and investigated cluster-level jackknife methods (unusually, treating the small area means as the characteristics of interest), \cite{hall2006a} proposed a parametric bootstrap approach for constructing bias-corrected estimates of the prediction mean squared error and prediction regions and \cite{chatterjee2008} used a different approach to construct parametric bootstrap prediction intervals. For the considerably more complicated non-normal case,  \cite{hall2006b} proposed a moment-matching, double-bootstrap procedure to estimate the prediction mean squared error. 

Many extensions of the basic approach have been developed.  These include, for example, introducing 
outlier robust estimators \citep{sinha2009robust, tzavidis2010robust}, spatial models to allow correlation between small areas \citep{saei2005, torabi2020estimation}, and different response distributions through using generalized linear mixed models \citep{saei2003a} and models that allow responses with extra zeros \citep{chandra2016small}.  Other work has incorporated design-based considerations \citep{jiang2006estimation, jiang2011best}.


The standard asymptotic framework for model-based small area estimation under the nested error regression model follows \cite{kackar1981unbiasedness} and \cite{prasad1990estimation} in allowing the number of areas to increase while holding the area sizes fixed (or bounded).  This framework has several disadvantages, most notably that the small area estimators are not consistent and their asymptotic distributions are not known.  In turn, this hinders the derivation of approximate distributions (none are available), complicates the construction of mean squared error estimates and means that prediction intervals based on the estimated mean squared errors cannot be shown to achieve their nominal level even asymptotically.  To overcome these difficulties, we need both the number of areas and the sample size in each area to increase.  This appears to contradict the ``small'' in ``small area estimation'', but the approximations derived within this framework perform well (the ultimate purpose of deriving asymptotic approximations) even when some areas have quite small sample size.  In addition, many practical applications include a number of large ``small areas'' and no tiny ones, so it is often a practically relevant framework.  Examples occur in clinical research (clustered trials) when we study records on large groups (areas) of patients (units) with each group treated by a different medical practitioner or at a different hospital, in educational research when we look at records on college students (units) grouped within schools (areas), and in sample surveys when we observe people or households (units) grouped in defined clusters (areas). For example, \cite{arora1997superiority} gave an instance with $43$ areas ranging in size from $95$ to $633$ units, and such examples are common in poverty data \citep{pratesi2016analysis}.

We use recent increasing number of areas and increasing area size asymptotic results obtained by \cite{lyu2022increasing} for estimating the parameters of the nested error regression model by maximum likelihood or REML estimation and by  \cite{lyu2021asymptotics} for the EBLUPs for the random effects in the nested error regression model, to study small area estimators in the same framework.  Without having to assume normality in the model, we obtain simple approximations to the distributions of the mixed model estimators that involve the distribution of the characteristic of interest and a normal distribution.  We further obtain strikingly simple expressions for the asymptotic mean squared errors of the estimators which are easy to estimate and can be used in prediction intervals which are demonstrated to have correct asymptotic coverage.  Such results are are not available under the standard asymptotic framework. They are achieved by approximating the estimators directly and taking the (prediction) mean squared error of the approximation, rather than directly approximating the (prediction) mean squared error. Our results fill the practical and theoretical gap around the most widely used mixed model estimators in small area estimation and suggest how other similar gaps can be treated.   

We describe the nested error regression model, discuss the targets of estimation and the mixed model estimators we consider in Section   \ref{sec:desc}.  We present our increasing number of areas and increasing area size asymptotic results in Section \ref{sec:asympts} and use (model-based) simulation to demonstrate the relevance of these results to finite samples in Section \ref{sec:sims}.  We include a design-based simulation using real data on consumer expenditure on fresh milk products, and then use additional design-based simulations to explore some unexpected findings in Section \ref{sec:example}.  We conclude with a brief discussion in Section \ref{sec:discussion}.

\section{Small area estimation}\label{sec:desc}

Consider a population $U=\cup_{i=1}^g U_i$ of $N$ units, partitioned into $g$ exclusive areas  $U_i$, each containing $N_i$ units so that $\sum_{i=1}^g N_i=N$. 
Let $y_{ij}$ be a scalar survey variable of interest and $\sx_{ij}$ a vector of auxiliary variables for the $j$th unit in the $i$th area.  The problem of interest is to use data from a sample of units in $U$ to make inference about the area means $\bar{y}_i = N_i^{-1}\sum_{j=1}^{N_i} y_{ij}$.   We assume that the values of the auxiliary variables are known for every unit in the population and that the values of the survey variable are observed on a sample of units $s=\cup_{i=1}^g s_i$, where $s_i \subseteq U_i$ of size $n_i \le N_i$  is the set of sample units within $U_i$ and $\sum_{i=1}^g n_i=n$.   We assume that the units are selected using a non-informative sampling method such that $n_i > n_L >0$, so that some units from every area are included in the sample.  (This can be weakened provided the areas with no units are a non-informatively selected sample of the areas.) The available data $\mathcal{D}$ consists of $y_{ij}$ for $j \in s_i$, $i=1,\ldots,g$ and $\sx_{ij}$ for $j =1,\ldots, N_i$, $i=1,\ldots,g$.  

We assume the nested error regression model (or random intercept model) for the survey variable at the population level, so
	\begin{equation}\label{nerm}
		\yij=\mu(\sx_{ij})+\alpha_i+\eij, \qquad\text{for $ j=1,\ldots,N_i, \, i=1,\ldots,g$,}
	\end{equation}
where $\mu(\sx_{ij})$ is the regression function (the conditional mean of the response) given $\sx_{ij}$,  $\alpha_i$ is a random effect representing a random intercept or cluster effect and $e_{ij}$ is a random error.  We assume that the $\{\alpha_i\}$ and $\{e_{ij}\}$ are all mutually independent with mean zero and variances (called the variance components) $\siga^2$ and  $\sige^2$, respectively, and write $\btheta=[\siga^2, \sige^2]^T$.   These random variables do not have to be normally distributed so the responses $y_i$ are not necessarily normally distributed.  \cite{lyu2022increasing} showed that, in asymptotic theory with increasing area size, we need to distinguish between area variables (variables that vary within areas) which we place in the $p_w$-vector $\sx_{ij}^{(w)}$ and the between area variables (variables that are constant within areas) which we place in the $p_b$-vector $\sx_{i}^{(b)}$.   We then write the regression function as
	\begin{equation}\label{mean}
		\mu(\sx_{ij}) = \beta_0+\sx_{i}^{(b)T}\bbeta_1+ \sx_{ij}^{(w)T}\bbeta_2 = \su_i^T\bxi + \sx_{ij}^{(w)T}\bbeta_2,
	\end{equation}
where $\beta_0$ is the unknown intercept, $\bbeta_1$ is the unknown between area slope parameter and $\bbeta_2$ is the unknown within area slope parameter.  The intercept is a between area variable so it is convenient to group the intercept and the between area slope terms in the $(p_b+1)-$vectors $\su_i = [1,\,\, \sx_{i}^{(b)T}]^T$ and $\bxi= [\beta_0, \bbeta_1^T]^T$.  It is also sometimes useful to write the regression function in terms of $\sz_{ij} = [1,\,\, \sx_{i}^{(b)T}, \,\, \sx_{ij}^{(w)T}]^T =  [\su_{i}^{T}, \,\, \sx_{ij}^{(w)T}]^T$ and $\bbeta = [\beta_0, \bbeta_1^T, \bbeta_2^T]^T=[\bxi^T,\bbeta_2^T]^T$.   Finally, grouping the between and within area parameters, we write the full set of model parameters as $\bomega=[\beta_0,\bbeta_1^T,\siga^2,\bbeta_2^T,\sige^2]^T$.

The form (\ref{mean}) for the regression function includes the three cases with either or both between and within area variables.  It also gives us the option of replacing $\sx_{ij}^{(w)}$ by the population-area-mean-centered within area variables $\sx_{ij}^{(w)}-\bar{\sx}_i^{(w)}$, where $\bar{\sx}_i^{(w)} = N_i^{-1}\sum_{k=1}^{N_i} \sx_{ik}^{(w)} $, for  $j=1,\ldots,N_i$, $i=1,\ldots,g$, and including the area means $\bar{\sx}_i^{(w)}$ as contextual effects with the between area variables from $\sx_{ij}$ in $\sx_{i}^{(b)}$.  This centering makes the between and within area variables orthogonal and allows a very simple interpretation of the parameters from separate between and within area models; see \cite{yoon2020effect} and the references therein.  The maximum likelihood and REML estimators of the parameters are asymptotically orthogonal under the conditions of Theorem1 of \cite{lyu2022increasing}, so the effect is on the model parameters rather than the estimators; this is the reason we center about the population area means rather than the sample area means.  


	\newcommand{\xbariw}{\bar{\sx}_{i}^{(w)}}
	\newcommand{\xbarisw}{\bar{\sx}_{i(s)}^{(w)}}
	\newcommand{\xbarirw}{\bar{\sx}_{i(r)}^{(w)}}
	\newcommand{\xbariwT}{\bar{\sx}_{i}^{(w)T}}
	\newcommand{\xbariswT}{\bar{\sx}_{i(s)}^{(w)T}}
	\newcommand{\xbarirwT}{\bar{\sx}_{i(r)}^{(w)T}}
	\newcommand{ \ysyn}{M_i^{\text{clp}}}
	\newcommand{\ycom}{M_i^{\text{sam}}}
	\newcommand{ \hysyn}{\hat{M}_i^{\text{clp}}}
	\newcommand{\hycom}{\hat{M}_i^{\text{sam}}}

The two common targets when we are interested in the small area means are the actual small area means $\bar{y}_i = \su_i^T\bxi+\bar{\sx}_i^{(w)^T}\bbeta_{2}+\alpha_i+\bar{e}_{i}$
and the conditional linear predictors of the small area means 
\begin{equation*}\label{p-target2}
	{\eta}_i = \E(\bar{y}_i|\su_i, \sx_{i1}^{w},\ldots, \sx_{iN_i}^{(w)}, \alpha_i ) 
	= \su_i^T{\bxi}+\xbariwT{\bbeta}_{2}+\alpha_i,
\end{equation*}
$i=1,\ldots,g$.  The two targets satisfy $\bar{y}_i - \eta_i = \bar{e}_i$ so are not identical and the differences do not decrease in the standard asymptotic framework with fixed area size; the targets have the same expectations $\E(\bar{y}_i) = \E(\eta_i)$ but different variances $\var(\bar{y}_i) = \sigma_{\alpha}^2 + N_i^{-1}\sigma_{e}^2$ and $\var(\eta_i) = \sigma_{\alpha}^2$.  This has practical consequences because unbiased predictors of one will also be unbiased predictors of the other, but the (prediction) mean squared errors of unbiased predictors will differ with the target.  We prefer to predict $\bar{y}_i$ rather than $\eta_i$ because $\bar{y}_i$ is an easily interpretable finite population parameter that is is not tied to a specific population model and can be evaluated without any prediction error when $U_i$ is completely enumerated.  In contrast, $\eta_i$ is more difficult to interpret, is not a finite population parameter, is tied to the specific population model we are using and is subject to prediction error even when $U_i$ is completely enumerated.  The use of $\eta_i$ as a target for prediction in small area estimation arguably reflects the fact that early work on prediction for mixed models like (\ref{nerm}) and (\ref{mean}) \citep{kackar1981unbiasedness, prasad1990estimation} which was brought into the finite population context for small area estimation was actually done in an infinite population framework where $\bar{y}_i$ was observed (so was a trivial target)) and prediction was to other (infinite) populations (assumed to follow the same model) rather than within a given finite population.   In finite populations, prediction within the given population is the more relevant problem.  One advantage of working under our increasing area size framework is that the two targets are asymptotically the same up to order $N_i^{-1/2}$, because
\[
\bar{y}_i - \eta_i = \bar{e}_i = O_p(N_i^{-1/2}), \qquad \mbox{ as } N_i \rightarrow \infty.
\]
This means that the predictors are quite similar in large areas and may explain why the distinction between the two targets has been largely ignored in practice.

The prediction mean squared error for predicting a target random variable is minimised by the conditional expectation of the target given the observed data $\mathcal{D}$. Typically, the distribution of the target given $\mathcal{D}$ is derived from a model with unknown parameters (such as (\ref{nerm}) and (\ref{mean})) so the conditional expectation depends on these unknown parameters and a feasible predictor requires replacing the unknown parameters by estimators; in our case, we use the maximum likelihood or REML estimators of the parameters in the model (\ref{nerm}) and (\ref{mean}). 
		
To simplify notation, we write $j \not\in s_i$ to mean $j \in U_i\backslash s_i$, $k_i=(N_i-n_i)/N_i$, and use subscripts $(s)$ and $(r)$ to denote quantities related to sampled and non-sampled units. Specifically, let $\bar{y}_{i(s)}=n_i^{-1}\sum_{j \in s_i}y_{ij}$,  $\bar{y}_{i(r)}=(N_i-n_i)^{-1}\sum_{j\not\in s_i}y_{ij}$,  $\xbarisw=n_i^{-1}\sum_{j\in s_i}\sx_{ij}^{(w)}$, $\xbarirw=(N_i-n_i)^{-1}\sum_{j\not\in s_i}\sx_{ij}^{(w)}$,    $\bar{e}_{i(s)}=n_i^{-1}\sum_{j \in s_i}e_{ij}$ and $\bar{e}_{i(r)}=(N_i-n_i)^{-1}\sum_{j\not\in s_i}e_{ij}$.  Then, under the model (\ref{nerm}) and (\ref{mean}),  we can write the actual small area means $\bar{y}_i$ as
\begin{equation}\label{p-target1}
	\bar{y}_{i}=(1-k_i)\bar{y}_{i(s)}+k_i\bar{y}_{i(r)}=(1-k_i)\bar{y}_{i(s)}+k_i(\su_i^T\bxi+\xbarirwT\bbeta_{2}+\alpha_i+\bar{e}_{i(r)}).
\end{equation}
Taking the conditional expectation given the data $\mathcal{D}$ of (\ref{p-target1}) and
substituting the maximum likelihood or restricted maximum likelihood (REML) estimators	
$\hat{\bxi}$ and $\hat{\bbeta}_{2}$ for $\bxi$ and $\bbeta_2$,   respectively, and predicting $\E(\alpha_i|\mathcal{D})$ by the empirical best linear unbiased predictor (EBLUP)
	\begin{equation} \label{blup}
		\hat{\alpha}_i= \hat{\gamma}_i \{\bar{y}_{i(s)} - \su_i^T\hat{\bxi} - \xbariswT\hat{\bbeta}_{2}\}, \qquad \mbox{ with } \,\hat\gamma_i = s_i\hat\sigma_{\alpha}^2/(\hat\sigma_{e}^2+s_i\hat\sigma_{\alpha}^2),
	\end{equation}
we obtain the predictor 
\begin{equation}\label{composite}
		\begin{split}
			\hat{M}_i^{sam}&=(1-k_i)\bar{y}_{i(s)}+k_i\{ \su_i^T\hat{\bxi}+\xbarirwT\hat{\bbeta}_{2}+\hat{\alpha}_i\}, 
		\end{split}
\end{equation}
which the mixed model estimator of $\bar{y}_i$  \citep{tzavidis2010robust}; the superscript 'sam' shows that the target is the `small area mean'. It can also be called an EBLUP or a composite estimator \citep{costa2003using} as it combines the sample mean $\bar{y}_{i(s)}$ with a synthetic component $\su_i^T\hat{\bxi}+\xbarirwT\hat{\bbeta}_{2}+\hat{\alpha}_i$ \citep{rao2008some}.
The mixed model estimator of $\eta_i$ obtained as above is
	\begin{equation}\label{syn}
	\begin{split}
		\hat{M}_i^{\text{clp}}&=\su^T_i\hat{\bxi}+\bar{\sx}^{(w)T}_i\hat{\bbeta}_{2}+\hat{\alpha}_i
		= (1-k_i)\{ \su_i^T\hat{\bxi}+\xbariswT\hat{\bbeta}_{2}+\hat{\alpha}_i\}+k_i\{ \su_i^T\hat{\bxi}+\xbarirwT\hat{\bbeta}_{2}+\hat{\alpha}_i\},
	\end{split}
	\end{equation}
which is an EBLUP and a fully synthetic or indirect estimator \citep{prasad1990estimation,lahiri1995robust, jiang2011best}; the superscript `clp' shows that the target is the `conditional linear predictor'.

The main difference between the mixed model estimator of $\bar{y}_i$ (\ref{composite}) and the mixed model estimator of $\eta_i$ (\ref{syn}) estimators is that the former uses the observed $\bar{y}_{i(s)}$ whereas the latter uses a model-based prediction for this quantity.    The difference between the estimators of the two targets can be expressed in terms of the EBLUP for the random effect (\ref{blup}) as
	\begin{equation}\label{diff-syn-com}
		\begin{split}
			\hycom-\hysyn&=(1-k_i)\{ \bar{y}_{i(s)} -  \su_i^T\hat{\bxi}-\xbariswT\hat{\bbeta}_{2}-\hat{\alpha}_i\}
			=\frac{n_i}{N_i}\left\lbrace  \frac{\hat{\alpha}_i}{\hat{\gamma}_i}-\hat{\alpha}_i\right\rbrace 
			=\frac{\hat{\sigma}_e^2}{\hat{\sigma}_{\alpha}^2}\frac{\hat{\alpha}_i}{N_i}.
		\end{split}
	\end{equation} 
This difference is often quite small, but it can be large for the areas with extreme EBLUPs $\hat{\alpha}_i$, particularly if these are small areas and the estimated within area variance is much larger than the estimated between area variance so $\hat{\sigma}_e^2 > \hat{\sigma}_{\alpha}^2$.  
Asymptotically, the difference is $O_p(N_i^{-1})$, so the estimators are asymptotically equivalent up to this order and hence asymptotically closer than their respective targets; see Section \ref{sec:asympts} for details.  Again, these properties only hold when the area sizes are increasing and do not hold for fixed area-size asymptotics.

	We have emphasised that the predictor depends on the target random variable, but of course it also depends on the data and the model.  We note that both predictors (\ref{composite}) and (\ref{syn}) can be computed from the sample data $\mathcal{D}_{(s)}=\{(y_{ij}, \sx_{ij}^{(w)T})^T, j=1,\ldots n_i, \su_i, i=1,\ldots, g\}$ and the the population means $\xbariw$, $ i=1,\ldots, g$. If, instead of $\mathcal{D}$,  we only observe $\mathcal{D}_{(s)}$, the population mean $\xbariw$ and hence the non-sample mean of the within area variables $\xbarirw$ is unknown and also needs to be predicted.
If we estimate $\xbarirw$ by the simple nonparametric estimator $\xbarisw$,  we obtain the mixed model estimator/predictor of the small area mean $\bar{y}_i$ 
\begin{equation*}\label{compositestar}
		\begin{split}
			\hat{M}_i^{*sam}&=(1-k_i)\bar{y}_{i(s)}+k_i\{ \su_i^T\hat{\bxi}+\xbariswT\hat{\bbeta}_{2}+\hat{\alpha}_i\}. 
		\end{split}
\end{equation*}
Clearly, precision in the description of predictors (especially optimal predictors) is needed to avoid confusion.

The asymptotic results (especially Therem \ref{corol1})  in Section \ref{sec:asympts} below establish that, as $g, n_L\to\infty$, the mixed model estimators are asymptotically equivalent predictors of the small area means $\bar{y}_i$, they are asymptotically unbiased predictors and their asymptotic prediction mean squared errors are $\mse_{\text{LW},i} =n_i^{-1}k_i\sige^2$.   We can estimate $\mse_{\text{LW},i}$ by substituting the maximum likelihood or REML estimator for $\sige^2$ to obtain
	\begin{eqnarray}
		&	\hmse_{\text{LW},i} =n_i^{-1}k_i\hsige^2.\label{hmse:com}
	\end{eqnarray}
We can then construct simple, asymptotic $100(1-\varepsilon)\%$ prediction intervals for  $\bar{y}_{i}$ which we denote sam-LW and clp-LW, respectively, as
	\begin{eqnarray}
		&	[\hycom- \Phi^{-1}(1-\varepsilon/2)	\hmse_{\text{LW},i}^{1/2}, \, \hycom+ \Phi^{-1}(1-\varepsilon/2)	\hmse_{\text{LW},i}^{1/2}],\label{predint1}\\
		&	[ \hysyn- \Phi^{-1}(1-\varepsilon/2)	\hmse_{\text{LW},i}^{1/2}, \,  \hysyn+ \Phi^{-1}(1-\varepsilon/2)	\hmse_{\text{LW},i}^{1/2}],\label{predint2}
	\end{eqnarray}
where $\Phi^{-1}$ is the inverse of the standard normal cumulative distribution function.  The asymptotic coverage of the intervals (\ref{predint1}) and (\ref{predint2}) is guaranteed by Theorem \ref{thm1}, Slutsky's Theorem and Theorem \ref{corol1}, which do not require the assumption of normality in the model.

\section{Increasing area-size asymptotic results} \label{sec:asympts}


	We assume throughout that the true model describing the actual data generating mechanism is  given by (\ref{nerm}) and (\ref{mean}) with true parameter $\dot\bomega=[\dbeta_0,\dot\bbeta_1^T,\dsiga^2,\dot\bbeta_2^T,\dsige^2]^T$  and  take all expectations under the true model.  Let $\hat{\bomega}$ denote the normal maximum likelihood estimator  (MLE) of  $\dot\bomega$ obtained by maximizing the normal likelihood based on a sample $s$.  Similarly, let $\hat{\btheta}_{R}$ be the normal REML estimator of $\dot\btheta = [\dsiga^2,\dsige^2]^T$ obtained by maximizing the normal REML criterion function and  let $\hat{\bbeta}(\btheta) = [\hat{\beta}_0(\btheta), \hat{ \bbeta}_1(\btheta)^T, \hat{\bbeta}_2(\btheta)^T]^T$ be the profile likelihood estimator of $\dot\bbeta = [\dbeta_0,\dot\bbeta_1^T,\dot\bbeta_2^T]^T$ obtained by maximizing the normal likelihood with $\btheta$ held fixed.   We call $\hat{\bbeta}_{R} = \hat{\bbeta}(\hat{\btheta}_R) $ the normal REML estimator of $\dot\bbeta$, and $\hbomega_{R} = (\hat{\beta}_{R0}, \hat{\bbeta}_{R1}^T, \hat{\sigma}_{R\alpha}^2, \hat{\bbeta}_{R2}^T, \hat{\sigma}_{Re}^2)^T$  the normal REML estimator of $\dot\bomega$.  
	
	
Following \cite{lyu2022increasing},  we obtain the asymptotic properties of the maximum likelihood and REML estimators of the model parameters under  the following conditions: 
	
	\bigskip\noindent
	\textbf{Condition A}
	\begin{enumerate}
		\item The model (\ref{nerm}) and (\ref{mean}) holds with true parameters $\dot\bomega$ inside the parameter space $\Omega$.
		\item The number of clusters $g \to \infty$ and the minimum area sample size $n_{L}\to\infty$. 
		\item   The random variables $\{\alpha_i\}$ and $\{\eij\}$ are independent and identically distributed and 	there is a  $\delta>0$ such that $\E| \alpha_i|^{4+\delta}<\infty$ and $\E |e_{ij}|^{4+\delta}<\infty$
		for all $j=1,\ldots, {N_i}$, $i=1,\ldots,g$.
		
		\item  The limits $ \ssb_1 = \lim_{g \rightarrow \infty} g^{-1}\sum_{i=1}^g \sx_{i}^{(b)}$, $\bb_2 = \lim_{g \rightarrow \infty} g^{-1}\sum_{i=1}^g \sx_{i}^{(b)}\sx_{i}^{(b)^T}$ and \\$ \bb_3 = \lim_{g\to\infty}\lim_{s_{L}\to\infty} n^{-1}\sum_{i=1}^g\sum_{j\in s_i} (\sx_{ij}^{(w)}-\bar{\sx}_i^{(w)})(\sx_{ij}^{(w)}-\bar{\sx}_i^{(w)})^T$ exist, and the matrices $\bb_2$ and $\bb_3$ are positive definite. Further,  $\lim_{g\to\infty}g^{-1}\sumig|\bar{\sx}_i^{(w)}|^2<\infty$, and there exists a $\delta >0$ such that $\lim_{g\to\infty}g^{-1}\sumig|\sx_i^{(b)}|^{2+\delta}<\infty$ and $\lim_{g\to\infty}\lim_{s_{L}\to\infty}n^{-1}\sumig\sum_{j\in s_i}|\sx_{ij}^{(w)}-\bar{\sx}_i^{(w)}|^{2+\delta}  <\infty$.
		
	\end{enumerate}
	As noted in \cite{lyu2022increasing}, these are mild conditions.   Conditions A3 and A4 ensure that limits needed to ensure the existence of the asymptotic variance of the estimating function exist, and that we can establish a Lyapounov condition and hence a central limit theorem for the estimating function.  Condition A4 ensures the matrix
	\[
	\bb = \mbox{block diag}[\bb_u/\dsiga^2, 1/(2\dsiga^4), \bb_3/\dsige^2, 1/(2\dsige^4)], \qquad \mbox{ with }  \bb_u =\left[ \begin{matrix} 1 & \ssb_1^T\\ \ssb_1& \bb_2  \end{matrix}\right] ,
	\]
	is positive definite.  For later, note that $\bb_{u}=\lim_{g\to\infty}g^{-1}\sum_{i=1}^g \su_i\su_i^T$.   \cite{lyu2022increasing} established the following theorem for the normal maximum likelihood and normal REML estimators.
		\begin{theoremm}{(Lyu and Welsh, 2022)}  \label{thm1}
		Suppose  Condition A holds. Then, as $g, n_{L} \to\infty$, there are solutions $\hbomega$ to the normal maximum likelihood estimating equations and $\hbomega_R$ to the normal REML estimating equations satisfying $|\bk^{1/2}(\hbomega-\hbomega_R)|=O_p(1)$, where $\bk = \diag(g\bone_{p_b+2}^T,n\bone_{p_w+1}^T)$.  Moreover, for both estimators, we have $\hat\bxi-\dot\bxi=\bb_u^{-1}g^{-1}\sumig\su_i\alpha_i+o_p(g^{-1/2})$,  $\hat\sigma_{\alpha}^2-\dsiga^2=g^{-1}\sumig(\alpha_i^2-\dsiga^2)+o_p(g^{-1/2})$,  $\hat\bbeta_2-\dot\bbeta_2=	 \bb_3^{-1}n^{-1}\sumig\sum_{j\in s_i}(\sx_{ij}^{(w)} - \bar{\sx}_i^{(w)})\eij +o_p(n^{-1/2})$ and $\hat\sigma_e^2-\dot\sigma_e^2= n^{-1}\sumig\sum_{j\in s_i}(\eij^2-\dsige^2)+o_p(n^{-1/2})$, as $g, n_L \rightarrow \infty$,
so
		\begin{equation*}
			\bk^{1/2}(\hbomega-\dot\bomega) \xrightarrow{D} N(\bzero,\bc),
		\end{equation*}
		where  
			\begin{displaymath}
		\begin{split}
			\bc &=  \left[ \begin{matrix}
				\dsiga^2/(1- \ssb_1^T\bb_2^{-1}\ssb_1 )& - \dsiga^2 \ssb_1^T\bb_2^{-1}/(1- \ssb_1^T\bb_2^{-1}\ssb_1 ) &\E\alpha_1^3&\bzero_{[1:p_w]}&0\\
				- \dsiga^2 \bb_2^{-1}\ssb_1/(1- \ssb_1^T\bb_2^{-1}\ssb_1 ) & \dsiga^2\{\bb_2^{-1} + \bb_2^{-1}\ssb_1\ssb_1^T\bb_2^{-1}/(1- \ssb_1^T\bb_2^{-1}\ssb_1)\}&\bzero_{[p_b:1]}&\bzero_{[p_b:p_w]}&\bzero_{[p_b:1]}\\
				\E\alpha_1^3 &\bzero_{[1:p_b]}&\E\alpha_1^4-\dsiga^4&\bzero_{[1:p_w]}&0\\
				\bzero_{[p_w:1]} &\bzero_{[p_w:p_b]}&\bzero_{[p_w:1]}&\dsige^2 \bb_3^{-1}&\bzero_{[p_w:1]}\\
				0&\bzero_{[1:p_b]}&0&0& \E e_{ij}^4-\dsige^4
			\end{matrix}\right]
		\end{split}
	\end{displaymath}
and $\bzero_{[p:q]}$ denotes the $p \times q$-matrix of zeros.
	\end{theoremm}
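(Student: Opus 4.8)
The plan is to treat the normal maximum likelihood and REML estimating equations as Gaussian quasi-likelihood estimating equations for the mean and variance structure specified by \eqref{nerm}--\eqref{mean}, and to run a Taylor-expansion (Cram\'er-type) argument adapted to the anisotropic normalisation $\bk$. The structural fact that drives everything is that, within area $i$, the working covariance is the compound-symmetry matrix $\sige^2\bi_{n_i}+\siga^2\bone\bone^T$, with eigenvalue $\sige^2+n_i\siga^2$ on the constant direction and $\sige^2$ on its orthogonal complement. Hence the normal log-likelihood splits into a \emph{between-area} part, a function of the area-level residuals $\bar y_{i(s)}-\su_i^T\bxi-\xbariswT\bbeta_2$ that depends on all of $\bomega$, and a \emph{within-area} part, a function of the within-area deviations $y_{ij}-\bar y_{i(s)}-(\sx_{ij}^{(w)}-\xbarisw)^T\bbeta_2$ that depends only on $(\bbeta_2,\sige^2)$; indeed the within-area directions receive the constant GLS weight $\sige^{-2}$, so $\hat\bbeta_2$ (ML or REML) is exactly the pooled within-area least squares estimator for any fixed $\btheta$, and $\hat\sigma_e^2$ is governed by the within-area residual sum of squares. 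Writing $\boldsymbol{\psi}_n(\bomega)$ for the gradient of this objective (with the analogous REML modification), $\boldsymbol{\psi}_n$ is an affine combination of statistics that are linear and quadratic in the data, so $\E\{\boldsymbol{\psi}_n(\dot\bomega)\}=\bzero$ under the true model with no normality assumption, and the between/within split is exactly the source of the block pattern of both $\bk$ and $\bc$.

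Three analytic ingredients are then needed. First, a joint central limit theorem for $\bk^{-1/2}\boldsymbol{\psi}_n(\dot\bomega)$: by the independence of $\{\alpha_i\}$ and $\{\eij\}$ and the i.i.d.\ assumption, the between-block of the score is a sum of $g$ independent mean-zero terms built from $(\su_i,\alpha_i,\bar e_{i(s)})$ and the within-block is a double-array sum of $n$ mean-zero terms built from $(\sx_{ij}^{(w)}-\xbarisw,\eij)$; one verifies a Lyapounov condition from the $(4+\delta)$-th moment bounds of Condition~A3 (the quadratic pieces $\alpha_i^2$ and $\eij^2$ needing only $(2+\delta/2)$-th moments) together with the $(2+\delta)$-th moment bounds on the covariates and $\lim_{g\to\infty}g^{-1}\sumig|\xbariw|^2<\infty$ in A4, and computes the limiting covariance using $\ssb_1,\bb_2,\bb_3$. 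Independence of the $\alpha$'s and $\eij$'s annihilates the between/within cross-blocks; the $\bbeta$-blocks, coming from scores linear in the data, reproduce the normal-theory values $\dsiga^2\bb_u^{-1}$ and $\dsige^2\bb_3^{-1}$; and the variance-component entries acquire the non-Gaussian cumulants $\E\alpha_1^4-\dsiga^4$, $\E\eij^4-\dsige^4$ and the skewness cross-term $\E\alpha_1^3$. This reproduces $\bc$ exactly. Second, a uniform law of large numbers for the scaled Jacobian: $-\bk^{-1/2}\nabla\boldsymbol{\psi}_n(\bomega)\bk^{-1/2}\to\bb$ in probability, uniformly over $\bomega$ with $|\bk^{1/2}(\bomega-\dot\bomega)|$ bounded; Condition~A4 makes $\bb$ positive definite, hence invertible in the limit. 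Third, existence of roots: on the $\bk$-scaled sphere $|\bk^{1/2}(\bomega-\dot\bomega)|=\epsilon$, the first two ingredients show that $\boldsymbol{\psi}_n$ points inward with probability tending to one, so a Brouwer/Aitchison--Silvey argument places a zero of $\boldsymbol{\psi}_n$ inside the ball; the identical argument applies to the REML estimating function, and since both roots then lie in a $\bk^{-1/2}$-ball about $\dot\bomega$ one obtains $|\bk^{1/2}(\hbomega-\hbomega_R)|=O_p(1)$ for suitably chosen roots.

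To assemble, expand $\bzero=\boldsymbol{\psi}_n(\hbomega)$ about $\dot\bomega$ and use the Jacobian ULLN to get $\bk^{1/2}(\hbomega-\dot\bomega)=\bb^{-1}\bk^{-1/2}\boldsymbol{\psi}_n(\dot\bomega)+o_p(1)$; reading the four blocks of $\bb^{-1}=\mbox{block diag}[\dsiga^2\bb_u^{-1},\,2\dsiga^4,\,\dsige^2\bb_3^{-1},\,2\dsige^4]$ against the explicit linear--quadratic form of $\boldsymbol{\psi}_n(\dot\bomega)$ produces the four stated asymptotic-linearity expansions---for instance the $\su_i$-score contributes $\dsiga^{-2}\sumig\su_i\alpha_i$ at leading order, and multiplying by $\dsiga^2\bb_u^{-1}$ and the $g^{-1/2}$ scaling gives $\hat\bxi-\dot\bxi=\bb_u^{-1}g^{-1}\sumig\su_i\alpha_i+o_p(g^{-1/2})$, with the analogous computations for the other three components. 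Checking that the same expansion holds verbatim for the REML estimator---its modification perturbing the variance-component equations only through the degrees-of-freedom correction for the $1+p_b+p_w$ fixed effects, hence at lower order---gives $\hbomega_R$ the same leading linear term, and the central limit theorem $\bk^{1/2}(\hbomega-\dot\bomega)\xrightarrow{D}N(\bzero,\bc)$ (and likewise for $\hbomega_R$) then follows from the first ingredient and Slutsky's theorem.

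The main obstacle is the genuinely two-index character of the limit $g,n_L\to\infty$. The neighbourhood over which the Jacobian ULLN and the inward-pointing property must be controlled is anisotropic (radius of order $g^{-1/2}$ in the between coordinates and of order $n^{-1/2}$ in the within coordinates); the within-area statistics involve the area-level averages $\xbarisw$ and $\bar e_{i(s)}$, which themselves fluctuate at rate $n_i^{-1/2}$; and the profiling of $\bbeta$ out of the variance-component equations---together with the REML adjustment of those equations for the fixed effects---must be shown not to survive at leading order. Verifying, uniformly, that every cross-term between the $\alpha$-driven and the $\eij$-driven fluctuations is of smaller order is where the bulk of the work lies. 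By contrast, the absence of normality is essentially harmless: the first two moments of the model are correctly specified, so the standard quasi-likelihood sandwich $\bc=\bb^{-1}\bSigma\bb^{-1}$, with $\bSigma$ the limiting score covariance computed in the first step, continues to hold.
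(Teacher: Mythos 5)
This theorem is not proved in the paper at all: it is imported verbatim from \cite{lyu2022increasing}, so there is no in-paper proof to compare against, but your outline (anisotropic $\bk$-normalisation, Lyapounov CLT for the score, uniform convergence of the scaled Jacobian to $\bb$, Brouwer-type existence of roots on the $\bk^{1/2}$-sphere, and the linearisation $\bk^{1/2}(\hbomega-\dot\bomega)=\bb^{-1}\bk^{-1/2}\boldsymbol{\psi}_n(\dot\bomega)+o_p(1)$ read off blockwise) is precisely the Cram\'er/quasi-likelihood argument that the cited source uses, and the blocks you compute match the stated $\bb^{-1}$ and $\bc$. One small overstatement: $\hat\bbeta_2(\btheta)$ is not \emph{exactly} the pooled within-area least squares estimator unless $\bar{\sx}_{i(s)}^{(w)}$ lies in the span of $\su_i$, since the between-area regression of $\bar y_{i(s)}$ on $(\su_i,\bar{\sx}_{i(s)}^{(w)})$ also carries information about $\bbeta_2$; however that contribution is $O_p(g^{1/2}/n)=o_p(n^{-1/2})$ because $g/n\to 0$ when $n_L\to\infty$, so your leading-order expansion and the conclusion are unaffected.
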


	We use Theorem \ref{thm1} to derive the asymptotic distribution of the mixed model estimators (\ref{composite}) and (\ref{syn}). 
	We begin by establishing asymptotic linearity results for the estimators.  Recall that $k_i=(N_i-n_i)/N_i$.
	\begin{theoremm}
		Suppose Condition A holds.  Then, as $g,n_L\to\infty$, we have 
		\begin{displaymath}
			\begin{split}
				&	 \hycom-\bar{y}_{i} =	k_i\{ \bar{e}_{i(s)}-\bar{e}_{i(r)}\}+k_iO_p(n_L^{-1}+g^{-1/2}n_L^{-1/2}), \\& 
				\hysyn-\bar y_{i}	=		k_i\{ \bar{e}_{i(s)}-\bar{e}_{i(r)}\}+O_p(n_L^{-1}+g^{-1/2}n_L^{-1/2}) \quad\text{and}\\&
		\hysyn-\dot{\eta}_i = \bar{e}_{i(s)} +O_p(n_L^{-1}+g^{-1/2}n_L^{-1/2}).		
			\end{split}
		\end{displaymath}
	\end{theoremm}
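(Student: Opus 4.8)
The plan is to reduce all three statements to the single identity for the EBLUP error $\hat{\alpha}_i-\alpha_i$ obtained by expressing $\hat{\alpha}_i$ from (\ref{blup}) through $\bar{y}_{i(s)}$ and substituting its model representation. Under (\ref{nerm})--(\ref{mean}) with true parameter $\dot{\bomega}$ we have $\bar{y}_{i(s)}=\su_i^T\dot{\bxi}+\xbariswT\dot{\bbeta}_2+\alpha_i+\bar{e}_{i(s)}$, so
\[
\hat{\alpha}_i-\alpha_i=(\hat{\gamma}_i-1)\alpha_i+\hat{\gamma}_i\bar{e}_{i(s)}-\hat{\gamma}_i\su_i^T(\hat{\bxi}-\dot{\bxi})-\hat{\gamma}_i\xbariswT(\hat{\bbeta}_2-\dot{\bbeta}_2).
\]
The ingredients I would use are: (i) $\hat{\gamma}_i-1=-\hat{\sigma}_e^2/(\hat{\sigma}_e^2+n_i\hat{\sigma}_{\alpha}^2)=O_p(n_L^{-1})$, so $\hat{\gamma}_i=1+O_p(n_L^{-1})$, which holds because $n_i\ge n_L$ and, by Theorem~\ref{thm1}, $\hat{\sigma}_{\alpha}^2\to\dsiga^2>0$ and $\hat{\sigma}_e^2\to\dsige^2$ in probability; (ii) $\bar{e}_{i(s)}=O_p(n_L^{-1/2})$ and $\bar{e}_{i(r)}=O_p((N_i-n_i)^{-1/2})$; (iii) by Theorem~\ref{thm1}, $\hat{\bxi}-\dot{\bxi}=O_p(g^{-1/2})$ and $\hat{\bbeta}_2-\dot{\bbeta}_2=O_p(n^{-1/2})$, with $n^{-1/2}\le g^{-1/2}n_L^{-1/2}$ since $n\ge g n_L$; and (iv) the design quantities $\su_i,\xbariw,\xbarisw,\xbarirw$ are $O(1)$ under Condition A and $\hat{\alpha}_i=\alpha_i+o_p(1)=O_p(1)$ by the identity above.

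For the first line, subtracting (\ref{p-target1}) from (\ref{composite}) gives $\hycom-\bar{y}_i=k_i\{\su_i^T(\hat{\bxi}-\dot{\bxi})+\xbarirwT(\hat{\bbeta}_2-\dot{\bbeta}_2)+(\hat{\alpha}_i-\alpha_i)-\bar{e}_{i(r)}\}$; substituting the identity, the two $\su_i^T(\hat{\bxi}-\dot{\bxi})$ contributions combine into $(1-\hat{\gamma}_i)\su_i^T(\hat{\bxi}-\dot{\bxi})=O_p(g^{-1/2}n_L^{-1})$, the two $\hat{\bbeta}_2$-contributions combine into $(\xbarirw-\hat{\gamma}_i\xbarisw)^T(\hat{\bbeta}_2-\dot{\bbeta}_2)=O_p(n^{-1/2})$, while $(\hat{\gamma}_i-1)\alpha_i=O_p(n_L^{-1})$ and $\hat{\gamma}_i\bar{e}_{i(s)}=\bar{e}_{i(s)}+O_p(n_L^{-3/2})$; collecting gives the stated expansion $\hycom-\bar{y}_i=k_i\{\bar{e}_{i(s)}-\bar{e}_{i(r)}\}+k_iO_p(n_L^{-1}+g^{-1/2}n_L^{-1/2})$. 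For the second line I would use the decomposition of $\hysyn$ in (\ref{syn}) to write $\hysyn-\bar{y}_i=(1-k_i)\{\su_i^T\hat{\bxi}+\xbariswT\hat{\bbeta}_2+\hat{\alpha}_i-\bar{y}_{i(s)}\}+(\hycom-\bar{y}_i)$; by the definition (\ref{blup}) the first bracket equals $\hat{\alpha}_i(1-\hat{\gamma}_i^{-1})=-\hat{\alpha}_i\hat{\sigma}_e^2/(n_i\hat{\sigma}_{\alpha}^2)=O_p(n_L^{-1})$, in fact $O_p(N_i^{-1})$ after the factor $1-k_i=n_i/N_i$ (cf. (\ref{diff-syn-com})), so adding the bound just obtained yields $\hysyn-\bar{y}_i=k_i\{\bar{e}_{i(s)}-\bar{e}_{i(r)}\}+O_p(n_L^{-1}+g^{-1/2}n_L^{-1/2})$. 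For the third line, $\hysyn-\dot{\eta}_i=\su_i^T(\hat{\bxi}-\dot{\bxi})+\xbariwT(\hat{\bbeta}_2-\dot{\bbeta}_2)+(\hat{\alpha}_i-\alpha_i)$; substituting the identity, the $\su_i^T(\hat{\bxi}-\dot{\bxi})$ terms again combine with coefficient $1-\hat{\gamma}_i=O_p(n_L^{-1})$, the $\hat{\bbeta}_2$-terms give $(\xbariw-\hat{\gamma}_i\xbarisw)^T(\hat{\bbeta}_2-\dot{\bbeta}_2)=O_p(n^{-1/2})$ on using $\xbariw=(1-k_i)\xbarisw+k_i\xbarirw$, and $(\hat{\gamma}_i-1)\alpha_i=O_p(n_L^{-1})$, leaving $\bar{e}_{i(s)}$ as the only surviving leading term.

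I expect the one real subtlety to be the bookkeeping of this partial cancellation, not any delicate estimate. The naive contribution $\su_i^T(\hat{\bxi}-\dot{\bxi})=O_p(g^{-1/2})$ is by itself too large to fit in the claimed remainder $O_p(n_L^{-1}+g^{-1/2}n_L^{-1/2})$; it survives only because the same linear form $\su_i^T\hat{\bxi}$ is subtracted off inside $\hat{\alpha}_i$ with multiplier $\hat{\gamma}_i\approx1$, so what remains carries the extra factor $1-\hat{\gamma}_i=O_p(n_L^{-1})$, and one must keep these occurrences together before bounding; the same care is needed to combine the $\xbarisw$- and $\xbarirw$-weighted $\hat{\bbeta}_2$-errors before invoking $\hat{\bbeta}_2-\dot{\bbeta}_2=O_p(n^{-1/2})$. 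A secondary point is uniformity in $i$: the statement is read for a generic area, and to make the $O_p$ bounds uniform over $i$ one invokes the moment and averaging conditions of Condition A (boundedness of $\su_i,\xbariw,\ldots$ and $\hat{\sigma}_{\alpha}^2$ bounded away from $0$), but this does not affect the leading-order conclusions. Finally, $\bar{e}_{i(s)}$ and $\bar{e}_{i(r)}$ are deliberately left unexpanded because $N_i-n_i$ need not diverge, so $\bar{e}_{i(r)}$ may be only $O_p(1)$; no independence between $\bar{e}_{i(s)}$ and $\hat{\bomega}$ is needed since only orders of magnitude enter.
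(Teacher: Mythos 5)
Your proof is correct and follows essentially the same route as the paper: the same decompositions of $\hycom-\bar{y}_i$, $\hysyn-\bar{y}_i$ and $\hysyn-\dot{\eta}_i$, with the crucial cancellation of the $O_p(g^{-1/2})$ term $\su_i^T(\hat\bxi-\dot\bxi)$ coming from the EBLUP linearization together with the rates from Theorem 1. The only difference is that the paper imports the expansion $\hat\alpha_i=\alpha_i+\bar{e}_{i(s)}-\su_i^T(\hat\bxi-\dot\bxi)+O_p(n_L^{-1}+g^{-1/2}n_L^{-1/2})$ from the companion paper of Lyu and Welsh, whereas you rederive it from the exact algebraic identity for $\hat\alpha_i$ and $\hat\gamma_i-1=O_p(n_L^{-1})$; that derivation is valid and makes the argument self-contained.
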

	
	\begin{proof}
		From (\ref{composite}) and  (\ref{p-target1}), we can write 
		\begin{displaymath}
			\hycom-\bar y_{i}=k_i\left\lbrace \su_{i}^T(\hat \bxi-\dot\bxi)+\xbarirwT(\hat \bbeta_2-\dot\bbeta_{2})+\hat \alpha_i-\alpha_i+\bar{e}_{i(r)}\right\rbrace .
		\end{displaymath}
Using the approximation
		\begin{equation}\label{eblup approx}
			\hat{\alpha}_{i}= \alpha_i +\bar{e}_{i(s)}-  \su_i^{T}(\hat{\bxi}-\dot\bxi)+O_p(n_L^{-1}+g^{-1/2}n_L^{-1/2})
		\end{equation}	 
obtained by \cite{lyu2021asymptotics} in the proof of their Theorem 2 (see also their Supplementary Material page 14), and the result from Theorem 1 that  $\hat \bbeta_{2}-\dot \bbeta_{2}=O_p(n^{-1/2}) = O_p(g^{-1/2}n_L^{-1/2})$, as $g, n_L\to \infty$, we obtain the approximation
		\begin{equation*}\label{SAPMequation}
			\begin{split}
				\hycom-\bar y_{i}&=k_i\left\lbrace \bar{e}_{i(s)}-\bar{e}_{i(r)}+\xbarirwT(\hat\bbeta_{2}-\dot \bbeta_{2})+O_p(n_L^{-1}+g^{-1/2}n_L^{-1/2})\right\rbrace \\&	
				=	k_i\{ \bar{e}_{i(s)}-\bar{e}_{i(r)}\}+ k_iO_p(n_L^{-1}+g^{-1/2}n_L^{-1/2}).
			\end{split}
		\end{equation*}
		Similarly, from (\ref{syn}) and (\ref{p-target1}), we have
		\begin{displaymath}
			\hysyn-\bar y_{i} =\su_{i}^T(\hat\bxi-\dot\bxi)+\bar{\sx}^{(w)T}_i(\hat\bbeta_{2}-\dot\bbeta_{2})+(\hat\alpha_i-\alpha_i)-\bar{e}_{i}.
		\end{displaymath}
Substituting the approximation (\ref{eblup approx}) for $\hat{\alpha}_i$, we then obtain 
		\begin{equation} \label{useful}
			\begin{split}
				\hysyn-\bar y_{i} &=\bar{e}_{i(s)}-\bar{e}_{i}+\xbarirwT(\hat\bbeta_{2}-\dot \bbeta_{2})+O_p(n_L^{-1}+g^{-1/2}n_L^{-1/2})\\&
				=\bar{e}_{i(s)}-\bar{e}_{i}+O_p(n_L^{-1}+g^{-1/2}n_L^{-1/2})\\&
				=	k_i\{ \bar{e}_{i(s)}-\bar{e}_{i(r)}\}+O_p(n_L^{-1}+g^{-1/2}n_L^{-1/2}) , 
			\end{split}
		\end{equation}
		because $k_i\bar{e}_{i(s)}-k_i\bar{e}_{i(r)}=k_i\bar{e}_{i(s)}-\{\bar{e}_{i}-(1-k_i)\bar{e}_{i(s)}\}=\bar{e}_{i(s)}-\bar{e}_{i}$.
		
The final result follows from the fact that $\bar{y}_i = \dot\eta_i + \bar{e}_i$, so we have 
\begin{displaymath}
	\begin{split}
	& \hysyn-\dot{\eta}_i = \hysyn-(\bar y_{i} -\bar{e}_i) = \bar{e}_{i(s)} +O_p(n_L^{-1}+g^{-1/2}n_L^{-1/2}),
	\end{split}
\end{displaymath}
using (\ref{useful}).	 \end{proof} 
		We can also consider using the estimator of the small area mean to estimate $\dot{\eta}_i$.  We obtain the same leading term as for the estimator of the conditional linear predictor because $k_i\{ \bar{e}_{i(s)}-\bar{e}_{i(r)}\} + \bar{e}_i =  \bar{e}_{i(s)} - (n_i/N_i)\bar{e}_{i(s)}- k_i\bar{e}_{i(r)}+\bar{e}_{i}= \bar{e}_{i(s)}$.  However, it would be unusual in practice to use the estimator of the small area mean for this purpose,  so we only state the formal result for the estimator of the conditional linear predictor.  The estimators of the small area mean  and the conditional linear predictor have the same leading terms (so are asymptotically equivalent to first order.  However the remainders for the two estimators are different, showing that there can be higher order differences between them.   In particular, If we sample the entire $i$th cluster (so $n_i=N_i$ and $k_i=0$), we have $ \hycom-\bar{y}_{i}=0$ but  $\hysyn-\bar{y}_{i}= O_p(n_L^{-1}+g^{-1/2}n_L^{-1/2})$, so $\hysyn-\bar{y}_{i}$ is only asymptotically zero. 
		
	The asymptotic distribution of  the estimators is given by the following theorem.
	\begin{theoremm}\label{corol1}
		Suppose Condition A holds, $n_i^{1/2}/n_L \rightarrow 0$ and $n_i/N_i \to f_i$ for $0 \le f_i < 1$.   Then as $g, n_L\to\infty$, we have 
		\begin{align*}
			%
			&n_i^{1/2}(\hycom-\bar y_{i})\xrightarrow{D} N(0,(1-f_i)\dsige^2), \quad n_i^{1/2}( \hysyn-\bar y_{i})\xrightarrow{D} N(0,(1-f_i)\dsige^2), \\
			& \mbox{ and } \qquad n_i^{1/2}( \hysyn-\dot{\eta}_i)\xrightarrow{D} N(0,\dsige^2).
		\end{align*}
	\end{theoremm}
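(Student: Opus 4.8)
The plan is to read each limit off the asymptotic-linearity expansions in the previous theorem, prove a central limit theorem for the leading error-average term by the classical Lindeberg--L\'evy theorem, and use Slutsky's theorem to absorb the $O_p(n_L^{-1}+g^{-1/2}n_L^{-1/2})$ remainders and the deterministic weights. Throughout, Condition A3 is what makes the errors $\{e_{ij}\}_{j}$ within an area i.i.d.\ with a fixed law, mean zero, and variance $\dsige^2$, so that an ordinary (non-triangular) CLT applies as the area sizes grow.

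I would dispose of $n_i^{1/2}(\hysyn-\dot{\eta}_i)$ first: the third expansion gives $n_i^{1/2}(\hysyn-\dot{\eta}_i)=n_i^{-1/2}\sum_{j\in s_i}e_{ij}+n_i^{1/2}O_p(n_L^{-1}+g^{-1/2}n_L^{-1/2})$, where the remainder is $o_p(1)$ by the rate hypothesis $n_i^{1/2}/n_L\to0$ (together with $g\to\infty$), and $n_i^{-1/2}\sum_{j\in s_i}e_{ij}\xrightarrow{D}N(0,\dsige^2)$ by the classical CLT since $n_i\ge n_L\to\infty$; Slutsky's theorem finishes this case with no $f_i$ needed. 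For $n_i^{1/2}(\hycom-\bar{y}_i)$ and $n_i^{1/2}(\hysyn-\bar{y}_i)$, the first two expansions reduce both quantities to $n_i^{1/2}k_i\{\bar{e}_{i(s)}-\bar{e}_{i(r)}\}$ plus an $n_i^{1/2}$-scaled remainder that is $o_p(1)$ for the same reason, so everything rests on a CLT for this leading term.

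To handle $n_i^{1/2}k_i\{\bar{e}_{i(s)}-\bar{e}_{i(r)}\}$ I would exploit the independence of the sampled and non-sampled errors and write it as $k_i\,n_i^{-1/2}\sum_{j\in s_i}e_{ij}\;-\;k_i\,(n_i/(N_i-n_i))^{1/2}\,(N_i-n_i)^{-1/2}\sum_{j\notin s_i}e_{ij}$. The first normalized sum converges to $N(0,\dsige^2)$ by the classical CLT; the second does too, but this requires $N_i-n_i=N_i(1-n_i/N_i)\to\infty$, which is precisely where the hypothesis $f_i<1$ (together with $N_i\ge n_i\ge n_L\to\infty$) enters. Since $k_i=1-n_i/N_i\to1-f_i$ and $n_i/(N_i-n_i)\to f_i/(1-f_i)$, Slutsky's theorem turns the two pieces into independent normals with variances $(1-f_i)^2\dsige^2$ and $(1-f_i)f_i\dsige^2$, and their independent difference is $N(0,\,(1-f_i)\dsige^2)$ after the elementary identity $(1-f_i)^2+(1-f_i)f_i=1-f_i$ (equivalently, track $\var=k_i^2\dsige^2+k_i(n_i/N_i)\dsige^2=k_i\dsige^2\to(1-f_i)\dsige^2$ using $N_i-n_i=k_iN_i$). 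Adding back the $o_p(1)$ remainder and applying Slutsky once more yields the two stated limits.

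The main obstacle is bookkeeping rather than anything deep: carrying the deterministic weights $k_i,n_i,N_i$ through two independent error averages, getting the limiting-variance algebra right via $N_i-n_i=k_iN_i$ and $n_i/N_i\to f_i$, and checking that $n_i^{1/2}$ times the previous theorem's remainder is $o_p(1)$ --- the role of the rate hypotheses. The one structural point to get right is that $f_i<1$ is indispensable for the non-sampled average to obey a CLT; if $f_i$ were allowed to equal $1$, $N_i-n_i$ could stay bounded and $\bar{e}_{i(r)}$ need not be asymptotically normal, even though its variance contribution would then vanish. If preferred, one can bypass the two-piece Slutsky argument and invoke the Lyapunov CLT directly for the triangular array $\sum_{j\in s_i}a_{ij}e_{ij}+\sum_{j\notin s_i}b_{ij}e_{ij}$ with $a_{ij}=k_in_i^{-1/2}$ and $b_{ij}=-k_in_i^{1/2}(N_i-n_i)^{-1}$, the $(2+\delta)$-moment bound in Condition A3 supplying the Lyapunov ratio, which is $O(n_i^{-\delta/2})+O(N_i^{-\delta/2})\to0$.
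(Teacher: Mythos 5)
Your proposal is correct and follows essentially the same route as the paper: reduce to the leading term $k_i n_i^{1/2}(\bar{e}_{i(s)}-\bar{e}_{i(r)})$ via the linearity expansions of Theorem 2, apply the classical CLT separately to the independent sampled and non-sampled error averages (with $f_i<1$ ensuring $N_i-n_i\to\infty$), and combine the variances $(1-f_i)^2\dsige^2+(1-f_i)f_i\dsige^2=(1-f_i)\dsige^2$. Your added remarks --- making explicit that $n_i^{1/2}/n_L\to 0$ is what renders the $n_i^{1/2}$-scaled remainder negligible, and the optional Lyapunov triangular-array formulation --- are sound elaborations of details the paper leaves implicit rather than a different argument.
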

	\begin{proof} Since the $e_{ij}$ are independent, $\bar{e}_{i(s)}$ and $\bar{e}_{i(r)}$ are independent, and the central theorem ensures that $n_i^{1/2}\bar{e}_{i(s)}\xrightarrow{D} N(0,\dsige^2)$ and $(N_i-n_i)^{1/2}\bar{e}_{i(r)}\xrightarrow{D} N(0,\dsige^2)$.  Provided $f_i < 1$, it follows that
\begin{align*}
n_i^{1/2}\bar{e}_{i(r)} = k_i^{-1/2}(1-k_i)^{1/2} (N_i-n_i)^{1/2}\bar{e}_{i(r)}\xrightarrow{D} N(0,(1-f_i)^{-1}f_i\dsige^2).
\end{align*}
Consequently,   
\begin{align*}
k_in_i^{1/2}(\bar{e}_{i(s)} - \bar{e}_{i(r)}) \xrightarrow{D} N(0,(1-f_i)^2\{(1-f_i)^{-1}f_i + 1\}\dsige^2) = N(0,(1-f_i)\dsige^2),
\end{align*}	
and the theorem follows from Theorem 2. \end{proof}

\noindent Therem \ref{corol1} allows us to reach several interesting conclusions.  

\begin{itemize}
\item[i)] The asymptotic distribution of both estimators is the distribution of the target characteristic of interest $F_{\bar{y}_i}$ or $F_{\dot\eta_i}$; Therem \ref{corol1} suggests that a better approximation is the distribution of  $k_i(\bar{e}_{i(s)}-\bar{e}_{i(r)})$ plus $\bar{y}_i$ when the target is $\bar{y}_i$, or the distribution of $\bar{e}_{i(s)}$ plus $\dot{\eta}_i$  when the target is $\dot{\eta}_i$.  The first pair of random variables are uncorrelated while the second are independent (so the distribution is the convolution of the $N(0, n_i^{-1}\dot{\sigma}_e^2)$ distribution with $F_{\dot\eta_i}$).  These distributions are not the same in general, but when the random effects and error in the model are all normally distributed, these approximations are the same, being
$N( \su_i^T\dot{\bxi}+\xbariwT\dot{\bbeta}_{2},\,\dot{\sigma}_{\alpha}^2 + N_i^{-1}\dsige^2 + (1-f_i)n_i^{-1}\dsige^2) = N( \su_i^T\dot{\bxi}+\xbariwT\dot{\bbeta}_{2},\,\dot{\sigma}_{\alpha}^2 + n_i^{-1}\dsige^2)$
and
$N( \su_i^T\dot{\bxi}+\xbariwT\dot{\bbeta}_{2},\,\dot{\sigma}_{\alpha}^2 + n_i^{-1}\dsige^2)$, respectively.

\item[ii)]  The asymptotic mean squared error of $\hysyn$ for estimating $\dot{\eta}_i$ is  $n_i^{-1}\dot\sigma_{e}^2$  which is greater than or equal to $n_i^{-1}(1-f_i)\dot\sigma_{e}^2$,  the asymptotic mean squared error of $\hysyn$ (or $\hycom$) for estimating $\bar{y}_i$.  Thus, using the asymptotic mean squared error of $\hysyn$ for estimating $\dot{\eta}_i$ when we are estimating $\bar{y}_i$ is conservative.  This gives an intuitive explanation for why we might expect the Prasad-Rao estimator of the mean squared error for estimating $\dot{\eta}_i$ \citep[ see the Appendix]{prasad1990estimation} to tend to be greater than our mean squared error estimator.

\item[iii)] We can also use Theorem 1 to describe the rate at which we can estimate the asymptotic mean squared errors.  For example, we have $	\hmse_{\text{LW},i}- \mse_{\text{LW},i} =n_i^{-1}k_i(\hsige^2-\dsige^2),$
so it follows from Theorem 1 that $			k_i^{-1/2}n_i^{1/2}	n^{1/2}\left(	\hmse_{\text{LW},i}-\mse_{\text{LW},i}\right)=n^{1/2}(\hsige^2-\dsige^2)\xrightarrow{D}N(0,\E e_{11}^4-\dsige^4).$

\item[iv)] Theorem \ref{corol1} establishes that the prediction intervals (\ref{predint1}) and (\ref{predint2}) have the correct asymptotic level. 

\item[v)] The result for estimating $\dot{\eta}_i$ also holds when $f_i=1$.  In this case, for estimating $\bar{y}_i$, provided $N_i-n_i  \rightarrow \infty $, we have
$(N_i-n_i)^{1/2}(\hycom-\bar y_{i})\xrightarrow{P}0$ and $ (N_i-n_i)^{1/2}(  \hysyn-\bar y_{i})\xrightarrow{P}0.$
\end{itemize}

\section{Simulation study}\label{sec:sims}

\newcommand{\rmse}{\operatorname{RMSE}}

	In this section we present results from a model-based simulation study to compare the performance of the prediction intervals (\ref{predint1}) and (\ref{predint2}) with the Prasad-Rao interval (see Supplementary) based on the mixed model estimator of the small area mean (Sam), the mixed model estimator of the conditional linear predictor (Clp) and the proposed (LW) and Prasad-Rao (PR) estimators of their (prediction) root mean squared errors (RMSEs) discussed in the preceding sections. 
	
We generated population data with $g\in\{15,30,50\}$ small areas and $N_i$ units in each small area by making area 1 the smallest area ($N_1=N_L=40$) and then setting the remaining $N_i$ equal to the integer parts of $g-1$ independent uniform $[40,400]$ random variables.  The $N_i$ were generated once for each simulation setting so that each setting involved populations of fixed $N_i$ and hence fixed size $N= \sum_{i=1}^gN_i$.  For each setting, we generated $N$ population values of an auxiliary variable $x_{ij}$ with a cluster structure by setting $x_{ij}=3+2u_i+ 4v_{ij}$, where $u_i$ and $v_{ij}$ are independent standard normal random variables.  We centered the $x_{ij}$ about their small area means $\bar{x}_i$ to obtain the within small area variable $x_{ij}-\bar{x}_i$ and also included $\bar{x}_i$ as a between small area variable.  The $N$ population values for $y$ were generated from the model
\begin{equation} \label{simmod}
y_{ij}=\beta_{0}+\beta_1\bar{x}_{i}+\beta_2(x_{ij}-\bar{x}_{i})+\alpha_i +e_{ij},\quad j=1,\ldots, N_i,\, i=1,\ldots,g,
\end{equation}
where $\{\alpha_i\}$ were generated independently from $F_{\alpha}$ with $\E(\alpha_i)=0$ and $\var(\alpha_i)=\sigma_{\alpha}^2$, and independently, $\{e_{ij}\}$ were generated independently from $F_e$ with $\E(e_{ij})=0$ and $\var(e_{ij})=\sigma_{e}^2$. We set the true regression parameters $\dot\bbeta=[5,7,3]^T$, the variance components $\dsiga^2\in\{4,64\}$ and $\dsige^2\in \{25,100\}$, and the distributions
$F_{\alpha} = N(0, \dot{\sigma}_{\alpha}^2)$ or $F_{\alpha} = 0.3 N(0.5, 1) + 0.7 N\big(\mu,\{\dot{\sigma}_{\alpha}^2-0.375-0.7\mu^2\}/0.7\big)$, and 
$F_{e} = N(0, \dot{\sigma}_{e}^2)$ or $F_{e} = 0.3 N(0.5, 1) + 0.7 N(\dot\mu, \{\dot{\sigma}_{e}^2-0.375-0.7\dot\mu^2\big\}/0.7)$ with  $\dot\mu=-0.3\times0.5/0.7$.  The $3$ values of $g$ and $2$ for each of $\dsiga^2$, $\dsige^2$, $F_{\alpha}$ and $F_e$ produced $48$ different simulation settings.
	
	For each of the $48$ simulation settings, we generated $1000$ populations and then selected one sample via simple random sampling without replacement from each population.  We set the sample size in each small area to $n_i=25$ if $N_i<50$, $n_i=\lfloor0.5* N_i\rfloor$ if $50\le N_i\le 100$, and $n_i=\lfloor0.25* N_i\rfloor$ if $N_i>100$,  where $\lfloor  \quad\rfloor$ is the integer part function, and selected the units in each area independently by simple random sampling without replacement.  For each sample, we fitted the model (\ref{simmod}) using REML in \texttt{lmer} and computed the mixed model estimators of the small area means (Sam) and conditional linear predictors (Clp) and the root mean squared error estimates (LW and PR).  We also computed the $95\%$ prediction intervals based on these estimates described in (\ref{predint1}) and (\ref{predint2}).  We compared the performance of the intervals by examining the empirical coverage (Cvge) and the relative expected length (RLen) of the prediction intervals. 
	
For each simulation setting, we report the size $N_i$ and the sample size $n_i$ for each small area. 
For the three intervals, for every small area we computed the empirical coverage probabilities (Cvge) and the average relative lengths (Rlen) defined as
\[
\mbox{Rlen} = |\mbox{Alen} - \mbox{RMSE}_{S}|/\mbox{RMSE}_{S},
\]
where Alen is the average of the estimates of the root mean squared errors of the estimators and RMSE$_{S}$ is the square root of the average of the squared differences between the estimator and the true value over the $1000$ simulations.  (We suppress the subscript $i$ for simplicity.)  The measure Alen is proportional to the average length of the intervals and RLen measures how close the lengths of the intervals are to what they should be.  Note that ALen is the same for Sam-LW and Clp-LW because they have the same root mean squared error estimators, but RLen is different for these two intervals because the different point estimators have different RMSE$_{S}$.  On the other hand, Rlen is proportional to Alen for Clp-LW  and Clp-PR because these both have the same RMSE$_{S}$.  This means that we can evaluate the difference between LW and PR by comparing Clp-LW and Clp-PR; we can also gain insight into the difference between using Sam and Clp by comparing Sam-LW with Clp-LW.  

The full set of results is given in the Supplementary Material; we present  and discuss illustrative cases below.   Table \ref{tab1} shows the empirical coverage and the relative length of the three intervals for the setting with variances $\dsiga^2=4$ and $\dsige^2=100$ (weak within cluster correlation $0.04$) when $\alpha_{i}$ and $e_{ij}$ have normal distributions; Table \ref{tab2} shows the results for the setting with variances $\dsiga^2=64$ and $\dsige^2=100$ (moderate within cluster correlation $0.39$) when $\alpha_{i}$ has a mixture distribution and $e_{ij}$ has a normal distribution; results for the remaining 48 simulation settings are similar.  The areas are presented and labeled in order of increasing size.  Simulation standard errors for the coverage probabilities can  be obtained as $\{\text{Cvge}(1-\text{Cvge})/1000\}^{1/2}$; they are approximately 0.008 or smaller.

\newcommand{\cmnt}[1]{\ignorespaces}
\def \hfillx {\hspace*{-\textwidth} \hfill}

\begin{table}[h]			
\captionsetup{width=1.\textwidth }
\centering
\caption{Simulated coverage and length of prediction intervals when $\alpha_{i}$ and $e_{ij}$ have  normal distributions with variances $\dsiga^2=4$ and $\dsige^2=100$, respectively.}\label{tab1}
\begin{tabular}{cccccccccccc}
\toprule
\multicolumn{3}{c}{}                                  &  & \multicolumn{2}{c}{Sam-LW} &  & \multicolumn{2}{c}{Clp-LW} &  & \multicolumn{2}{c}{Clp-PR} \\\cline{1-3}\cline{5-6} \cline{8-9}\cline{11-12}
Area & N\_i & n\_i & 
& Cvge         & Rlen        &  & Cvge         & Rlen        &  & Cvge         & Rlen        \\\hline
1    & 40   & 20   \cmnt{& -0.03 }                                 &  & 0.965        & 0.10         &  & 0.925        & 0.09        &  & 0.941        & 0.18        \\
2    & 51   & 26   \cmnt{& -0.01}                                  &  & 0.965        & 0.08        &  & 0.966        & 0.05        &  & 0.976        & 0.30         \\
3    & 82   & 41   \cmnt{& 0.01}                                   &  & 0.962        & 0.08        &  & 0.934        & 0.05        &  & 0.987        & 0.46        \\
4   & 86   & 43   \cmnt{& 0.05}                                   &  & 0.980         & 0.19        &  & 0.975        & 0.13        &  & 0.976        & 0.33        \\
5    & 100  & 25   \cmnt{& 0.02}                                   &  & 0.971        & 0.17        &  & 0.965        & 0.10         &  & 0.973        & 0.33        \\
6    & 110  & 28   \cmnt{& -0.03}                                  &  & 0.976        & 0.18        &  & 0.967        & 0.13        &  & 0.961        & 0.27        \\
7    & 113  & 28   \cmnt{& -0.03}                                  &  & 0.972        & 0.17        &  & 0.97         & 0.13        &  & 0.978        & 0.30         \\
8    & 120  & 30   \cmnt{& 0.02}                                   &  & 0.967        & 0.15        &  & 0.966        & 0.11        &  & 0.967        & 0.29        \\
9   & 122  & 30   \cmnt{& 0.05}                                   &  & 0.971        & 0.09        &  & 0.933        & 0.06        &  & 0.969        & 0.32        \\
10    & 135  & 34   \cmnt{& 0.03}                                   &  & 0.962        & 0.07        &  & 0.931        & 0.07        &  & 0.993        & 0.45        \\
11   & 141  & 35   \cmnt{& -0.03}                                  &  & 0.973        & 0.15        &  & 0.966        & 0.11        &  & 0.975        & 0.36        \\
12   & 147  & 37   \cmnt{& 0}                                      &  & 0.963        & 0.10         &  & 0.962        & 0.06        &  & 0.988        & 0.35        \\
13    & 150  & 38   \cmnt{& 0.01}                                   &  & 0.967        & 0.11        &  & 0.962        & 0.06        &  & 0.968        & 0.24        \\
14   & 152  & 38   \cmnt{& -0.04}                                  &  & 0.976        & 0.11        &  & 0.968        & 0.06        &  & 0.979        & 0.29        \\
15   & 175  & 44   \cmnt{& -0.02}                                  &  & 0.967        & 0.13        &  & 0.96         & 0.10         &  & 0.975        & 0.36         \\\bottomrule
\end{tabular}
\end{table}
\begin{table}
\captionsetup{width=1.\textwidth }
\centering
\caption{Simulated coverage and length of prediction intervals when $\alpha_{i}$ has a mixture distribution and $e_{ij}$ has a  normal distribution with variances $\dsiga^2=64$ and $\dsige^2=100$, respectively.}\label{tab2}
\begin{tabular}{cccccccccccc}
\toprule
\multicolumn{3}{c}{Method}                                  &  & \multicolumn{2}{c}{Sam-LW} &  & \multicolumn{2}{c}{Clp-LW} &  & \multicolumn{2}{c}{Clp-PR} \\\cline{1-3}\cline{5-6} \cline{8-9}\cline{11-12}
Area & N\_i & n\_i \cmnt{& $\hat{\alpha}_i/\hat\sigma_{\alpha}$} &  & Cvge         & Rlen        &  & Cvge         & Rlen        &  & Cvge         & Rlen        \\\hline
1    & 40   & 20   \cmnt{& -0.04}                                  &  & 0.957 & 0.05 &  & 0.934 & 0.03 &  & 0.985 & 0.50  \\
2   & 45   & 20   \cmnt{& -0.02}                                 &  & 0.950  & 0.03 &  & 0.937 & 0.03 &  & 0.987 & 0.42 \\
3    & 58   & 29   \cmnt{& -0.01}                                  &  & 0.957 & 0.01 &  & 0.945 & 0.06 &  & 0.985 & 0.47 \\
4    & 65   & 32   \cmnt{& 0.04}                                   &  & 0.965 & 0.02 &  & 0.945 & 0.03 &  & 0.993 & 0.51 \\
5    & 81   & 40   \cmnt{& 0.01}                                   &  & 0.950  & 0.01 &  & 0.936 & 0.06 &  & 0.989 & 0.46 \\
6    & 85   & 42   \cmnt{& -0.05}                                  &  & 0.957 & 0.01 &  & 0.947 & 0.02 &  & 0.995 & 0.52 \\
7   & 103  & 26   \cmnt{& -0.03}                                &  & 0.960  & 0.05 &  & 0.949 & 0.03 &  & 0.984 & 0.31\\
8   & 109  & 27   \cmnt{& 0.07 }                                &  & 0.965 & 0.06 &  & 0.960  & 0.04 &  & 0.986 & 0.31 \\
9    & 115  & 29   \cmnt{& 0.05}                                   &  & 0.957 & 0.06 &  & 0.952 & 0.04 &  & 0.991 & 0.33 \\
10   & 150  & 38   \cmnt{& -0.06}                                 &  & 0.963 & 0.08 &  & 0.959 & 0.07 &  & 0.984 & 0.36 \\
11   & 151  & 38   \cmnt{& 0.05}                                   &  & 0.946 & 0.01 &  & 0.945 & 0.01 &  & 0.977 & 0.28 \\
12    & 162  & 40   \cmnt{& -0.04 }                              &  & 0.952 & 0.06 &  & 0.951 & 0.05 &  & 0.986 & 0.33 \\
13    & 163  & 41   \cmnt{& -0.03}                               &  & 0.946 & 0.02 &  & 0.947 & 0.01 &  & 0.988 & 0.28 \\
14    & 180  & 45  \cmnt{& 0.03}                                &  & 0.947 & 0.00  &  & 0.947 & 0.01 &  & 0.985 & 0.25 \\
15   & 193  & 48   \cmnt{& 0.02}                                 &  & 0.961 & 0.06 &  & 0.958 & 0.06 &  & 0.989 & 0.34 \\
\bottomrule
\end{tabular}
\end{table}

The simulation results show that our asymptotic results based on both $g$ and $n_L$ going to infinity provide useful approximations that work well even when $g=15$ and $n_L=20$.   The empirical coverages of all three intervals are close to the nominal level and tend to the nominal level as $g$ and $n_L$ increase, confirming our large $g$ and $n_L$ asymptotic results.  Comparing the empirical coverage and relative length for Clp-LW with Clp-PR, we see that the \cite{prasad1990estimation} mean squared error estimator PR is typically larger than our proposed estimator LW, so Clp-PR is more conservative (with wider intervals) than Clp-LW.  The differences in performance between Sam-LW and Clp-LW are smaller with both performing well.  


\section{Consumer expenditure on fresh milk products}\label{sec:example}

We obtained data from the Dairy Survey component of the 2002 Consumer Expenditure Survey conducted by the U.S. Bureau of the Census for the U.S. Bureau of Labor Statistics; the data are available from \url{https://www.bls.gov/cex/pumd_data.htm}.  We treated the consumer expenditure on fresh milk products (MILKPROD) as the survey variable of interest and considered the problem of estimating the average consumer expenditure on fresh milk products in different states (small areas).  We used the total expenditure on food (FOODTOT), the number of persons under age 18 in the family (PERSLT18) and the total family income before taxes in the last 12 months (FINCBEFX) as the auxiliary variables.  This data set is similar to that used in \cite{arora1997superiority}; they used the earlier 1989 survey, focussed on the expenditure on fresh whole milk, and potentially used different auxiliary variables.  If we knew the means of the auxiliary variables for each state, we could use our methods to estimate the average expenditure on fresh milk products in 2002 in each state.  As we do not have this information, we instead treated the data set as a pseudo-population and sampled from it.  We repeated this sampling $1000$ times, effectively implementing a design-based simulation from our fixed population to evaluate the design-based properties of our proposed model-based methods.  

In creating the population, we discarded 6 states with fewer than 10 observations, leaving us with $N=4022$ observations from $g=34$ states with between $N_L=36$ and $N_U=397$ observations from each state.  We centered the auxiliary variables about their cluster means (adding $cent$ to their variable name) and then included the cluster means (adding $avg$ to the variable name) as between state variables so that we have $p_b=3$ plus $p_w=3$ auxiliary variables.  The cluster means give the average per family of each variable for each state.  
We selected the $1000$ samples independently by simple random sampling without replacement from each state with $n_L = 20$ by setting $n_i=20$ if $N_i<50$, $n_i=\lfloor0.5* N_i\rfloor$ if $50\le N_i\le 100$, and $n_i=\lfloor0.25* N_i\rfloor$ if $N_i>100$,  where $\lfloor  \quad\rfloor$ is the integer part function.  In each sample, we fitted the model (\ref{milkmodel}) 
\begin{align}  \label{milkmodel}
MILKPROD_{ij}=&\beta_{0}+\beta_1 FOODTOTavg_i+\beta_2 PERSLT18avg_i+\beta_3FINCBEFXavg_i\\&
	+\beta_4 FOODTOTcent_{ij}+\beta_5 PERSLT18cent_{ij}+\beta_6 FINCBEFXcent_{ij}+\alpha_{i}+e_{ij},\nonumber
\end{align}
using \texttt{lmer} and then used the parameter estimates to compute the mixed model estimators of the small area means (Sam) and the conditional linear predictors (Clp)  and their prediction mean squared errors.  We then computed the $95\%$ model-based prediction intervals (\ref{predint1}) and (\ref{predint2}).  Similar to the model-based simulation, we examined the empirical design-coverage (Cvge) and the relative design-expected length (Rlen) of the prediction intervals; the results over $1000$ samples together with the standardised population EBLUPs $\hat\alpha_i/\hat{\sigma}_{\alpha}$ for each state are shown in Table \ref{tab6}.  The relative design-bias and the design RMSEs of Sam and Clp, together with the design-averages of the LW and PR estimators of the RMSEs are available in the Supplementary Information.



\begin{table}[!h]
\captionsetup{width=1.\textwidth }
\centering
\caption{Simulated design-coverage and design-expected length of nominal $95\%$ confidence intervals  for the average consumer expenditure on fresh milk products in each state in 2002.  $^*$ identifies states in Group 3 and $^\dagger$ identifies states in Group 2. }\label{tab6}
\begin{tabular}{ccccccccccccc}
\toprule
\multicolumn{4}{c}{$\rmse$}                                  &  & \multicolumn{2}{c}{Sam-LW} &  & \multicolumn{2}{c}{Clp-LW} &  & \multicolumn{2}{c}{Clp-PR} \\\cline{1-4} \cline{6-7}\cline{9-10}\cline{12-13}
STATE    & N\_i    & n\_i   & $\hat\alpha_i/\hat{\sigma}_{\alpha}$ &  & Cvge         & Rlen        &  & Cvge         & Rlen        &  & Cvge         & Rlen        \\\hline
16    & 36   & 20   & 0.14                                   &  & 0.997 & 0.57 &  & 1.000     & 0.94 &  & 0.996 & 1.13 \\
50$^\dagger$    & 37   & 20   & 0.52                 &  & 0.912 & 0.12 &  & 0.926 & 0.25 &  & 0.855 & 0.19 \\
31    & 42   & 20   & -0.33                                  &  & 1.000     & 0.58 &  & 1.000     & 0.28 &  & 0.981 & 0.29 \\
22$^\dagger$    & 43   & 20   & -0.48                &  & 1.000     & 0.56 &  & 1.000     & 0.08 &  & 0.917 & 0.08 \\
21    & 44   & 20   & -0.17                                  &  & 1.000     & 0.92 &  & 1.000     & 0.85 &  & 0.992 & 0.84 \\
15    & 45   & 20   & 0.10                                    &  & 1.000     & 0.85 &  & 1.000     & 1.22 &  & 0.995 & 1.18 \\
32$^\dagger$    & 46   & 20   & 0.71                 &  & 0.921 & 0.14 &  & 0.832 & 0.35 &  & 0.698 & 0.37 \\
37$^\dagger$    & 47   & 20   & -0.48                &  & 1.000     & 0.36 &  & 1.000     & 0.05 &  & 0.915 & 0.09 \\
1     & 52   & 26   & -0.01                                  &  & 1.000     & 1.54 &  & 1.000     & 2.06 &  & 1.000  & 2.53 \\
45    & 53   & 26   & 0.04                                   &  & 0.973 & 0.14 &  & 0.997 & 0.43 &  & 0.991 & 0.63 \\
2$^*$     & 58   & 29   & 1.22                                   &  & 0.795 & 0.32 &  & 0.414 & 0.56 &  & 0.571 & 0.47 \\
9$^*$     & 59   & 30   & -1.08                                  &  & 0.745 & 0.37 &  & 0.360  & 0.59 &  & 0.554 & 0.49 \\
41    & 65   & 32   & 0.12                                   &  & 0.984 & 0.25 &  & 0.997 & 0.70  &  & 0.998 & 1.10  \\
49    & 67   & 34   & 0.00                                      &  & 0.995 & 0.48 &  & 1.000     & 0.75 &  & 0.999 & 1.24 \\
18    & 71   & 36   & -0.27                                  &  & 0.971 & 0.02 &  & 0.996 & 0.25 &  & 0.996 & 0.63 \\
27$^*$    & 76   & 38   & 1.31                                   &  & 0.740  & 0.39 &  & 0.449 & 0.56 &  & 0.658 & 0.43 \\
8$^\dagger$     & 82   & 41   & 0.66                   &  & 0.867 & 0.23 &  & 0.840  & 0.28 &  & 0.931 & 0.03 \\
13    & 93   & 46   & -0.09                                  &  & 0.991 & 0.30  &  & 0.998 & 0.76 &  & 1.000     & 1.43 \\
24$^\dagger$    & 94   & 47   & -0.91                 &  & 0.933 & 0.06 &  & 0.790  & 0.35 &  & 0.908 & 0.09 \\
29    & 98   & 49   & -0.26                                  &  & 1.000     & 0.78 &  & 1.000     & 0.75 &  & 1.000 & 1.47 \\
53    & 99   & 50   & 0.24                                   &  & 0.998 & 0.63 &  & 0.999 & 0.79 &  & 1.000     & 1.56 \\
55    & 119  & 30   & -0.49                                  &  & 1.000     & 0.47 &  & 1.000  & 0.31 &  & 0.976 & 0.30 \\ 
51    & 122  & 30   & 0.01                                   &  & 0.998 & 0.80  &  & 0.998 & 1.21 &  & 1.000     & 1.19 \\
25    & 126  & 32   & 0.06                                   &  & 1.000     & 1.06 &  & 1.000  & 1.55 &  & 1.000 & 1.59 \\
4     & 133  & 33   & 0.13                                   &  & 0.999 & 0.55 &  & 1.000     & 0.79 &  & 0.997 & 0.83 \\
26    & 139  & 35   & -0.29                                  &  & 0.997 & 0.51 &  & 1.000     & 0.60  &  & 0.996 & 0.68 \\
34    & 160  & 40   & -0.16                                  &  & 0.998 & 0.61 &  & 1.000     & 0.84 &  & 1.000  & 1.00    \\
17    & 161  & 40   & -0.28                                  &  & 1.000  & 0.81 &  & 1.000  & 0.86 &  & 1.000     & 1.02 \\
39    & 229  & 57   & -0.16                                  &  & 0.999 & 0.64 &  & 0.999 & 0.87 &  & 1.000     & 1.23 \\
42    & 261  & 65   & -0.57                                  &  & 0.981 & 0.15 &  & 0.988 & 0.11 &  & 0.993 & 0.36 \\
36    & 280  & 70   & 0.02                                   &  & 0.975 & 0.16 &  & 0.984 & 0.30  &  & 0.997 & 0.63 \\
12    & 283  & 71   & 0.58                                   &  & 0.982 & 0.12 &  & 0.979 & 0.09 &  & 0.991 & 0.36 \\
48    & 305  & 76   & -0.44                                  &  & 0.986 & 0.27 &  & 0.991 & 0.30  &  & 1.000     & 0.65 \\
6     & 397  & 99   & 0.59                                   &  & 0.964 & 0.05 &  & 0.968 & 0.04 &  & 0.998 & 0.37 \\
\bottomrule  
\end{tabular}
\end{table}

To interpret the results, we partition the states into three groups: Group 3 with three states $\{2, 9, 27\}$ for which the design-coverage of all three intervals is well below the nominal level; Group 2 with six states $\{8, 22, 24, 32, 37, 50\}$ for which at least one interval has design-coverage below the nominal level, but not all intervals perform poorly; and Group 3 with the remaining twenty-five states for which the design-coverage of all three intervals is above the nominal level.  

For the states in Group 1,  the design-coverages tend to be conservative and similar across all three intervals.  The design-biases for Sam are smaller than for Clp except in state 49 where the design-biases are very similar.  The design-average LW and PR estimators of the RMSE are greater than the simulation RMSEs of the Sam and Clp estimators, respectively.  The design-average LW estimator is smaller than the design-average PR estimator in all except 4 states (15, 21, 51 and 55) where PR has a slightly smaller design-average than LW.

Group 3 comprises the three states for which all three intervals have design-coverage well below the nominal level. Sam-LW has the best design-coverage, followed by Clp-PR and then Clp-LW.   Both point estimates Sam and Clp are design-biased with Clp  having greater design-bias than Sam.  The design-average RMSE estimator PR is on design-average greater than LW, explaining why Clp-PR has better design-coverage than Clp-LW.  However, Sam-LW has better design-coverage than Clp-PR because PR is not large enough to overcome the greater design-bias in Clp than in Sam. 

Group 2 comprises the states for which at least one but not all three intervals have design-coverage below the nominal level.  Usually Clp-LW but sometimes Clp-PR performs most poorly; Sam-LW usually performs better, but in state 8 Sam-LW also has low design-coverage.  Clp has larger design-bias than Sam for Group 2.  The design-average PR estimator of the RMSE is smaller than the simulation RMSE of the Clp estimator, while the design-average LW estimators of the RMSE is close to the simulation RMSEs of the Sam estimators. The RMSE estimator LW is on design-average smaller than PR; the larger PR reduces the effect of the design-bias of Clp on the design-coverage in state 8.    

These results show that the Sam-LW intervals are generally preferable to the Clp intervals for making design-based inference in our consumer expenditure on fresh milk products population.  Nonetheless, we did not expect to see results like those in Group 3 so we felt that further exploration of why these results occur would be interesting and useful.

As we have access to the whole population (which is not usual in practice), we are able to explore the population.  We used \texttt{lmer} from the R package \texttt{lme4} to fit the nested error regression model (\ref{milkmodel})
to the population data.  Figure \ref{fig1} shows a normal QQ-plot of the EBLUPs (with approximate $95\%$ prediction intervals computed as in \cite{lyu2021asymptotics}) and a normal QQ-plot of the errors from the fitted model.   Figure \ref{fig1} shows that it is plausible to treat the random effects as approximately normally distributed; the errors have an asymmetric long-tailed distribution and it is not plausible to treat the errors in the model as normally distributed.  In a model-based analysis, rather than simply relying on the asymptotic theory, we could consider using transformations to improve the fit, make predictions on the transformed scale and then invert the prediction intervals (possibly adjusting for back-transformation bias).  However, for our design-based analysis, as is arguably usual in practice, we keep the data on the raw scale.   Table \ref{tab5} shows the parameter estimates and standard errors for the fitted model.   Although some coefficients are not significant, we retain them in the model.  We see that $\hat{\sigma}_e^2/\hat{\sigma}_a^2=48.72$ is large so that the within cluster correlation is very small (approximately $0.02$).    


\def \hfillx {\hspace*{-\textwidth} \hfill}
\begin{table}[!h]
\captionsetup{width=.5\textwidth }
\centering
\caption{Parameter estimates (REML) for modelling the consumer expenditure on fresh milk products population.} 
\begin{tabular}{lll}
\toprule
Effect& Estimate& Std Error\\\hline
(Intercept)&2.6219 & 0.7622\\
FOODTOTavg &1.1617 &5.0871\\
PERSLT18avg &0.9235 &0.5526\\
FINCBEFXavg& 0.0155 &0.0083\\
FOODTOTcent& 4.7930& 0.3656\\
PERSLT18cent&0.7098 &0.0399\\
FINCBEFXcent&0.0015 &0.0010\\\hline
$\siga^2$&0.1757&\\
$\sige^2$&8.5600&\\
\bottomrule
\multicolumn{3}{l}{NOTE: lmer does not compute standard }\\
\multicolumn{3}{l}{errors for the variance components.}
\end{tabular}
\label{tab5}
\end{table}

\begin{figure}[htbp]
\begin{center}
\includegraphics[width=0.45\linewidth, height=4cm]{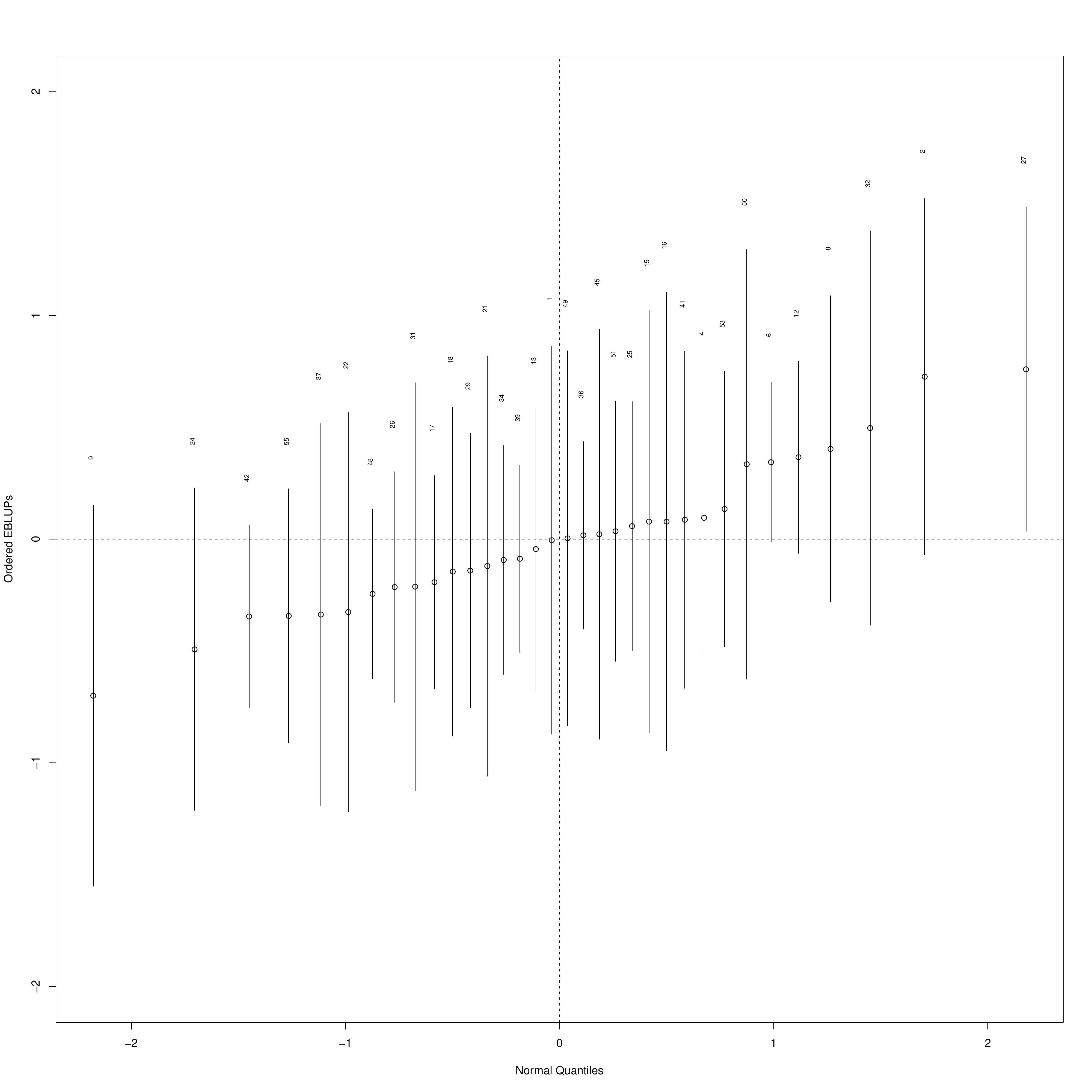}
\includegraphics[width=0.45\linewidth, height=4cm]{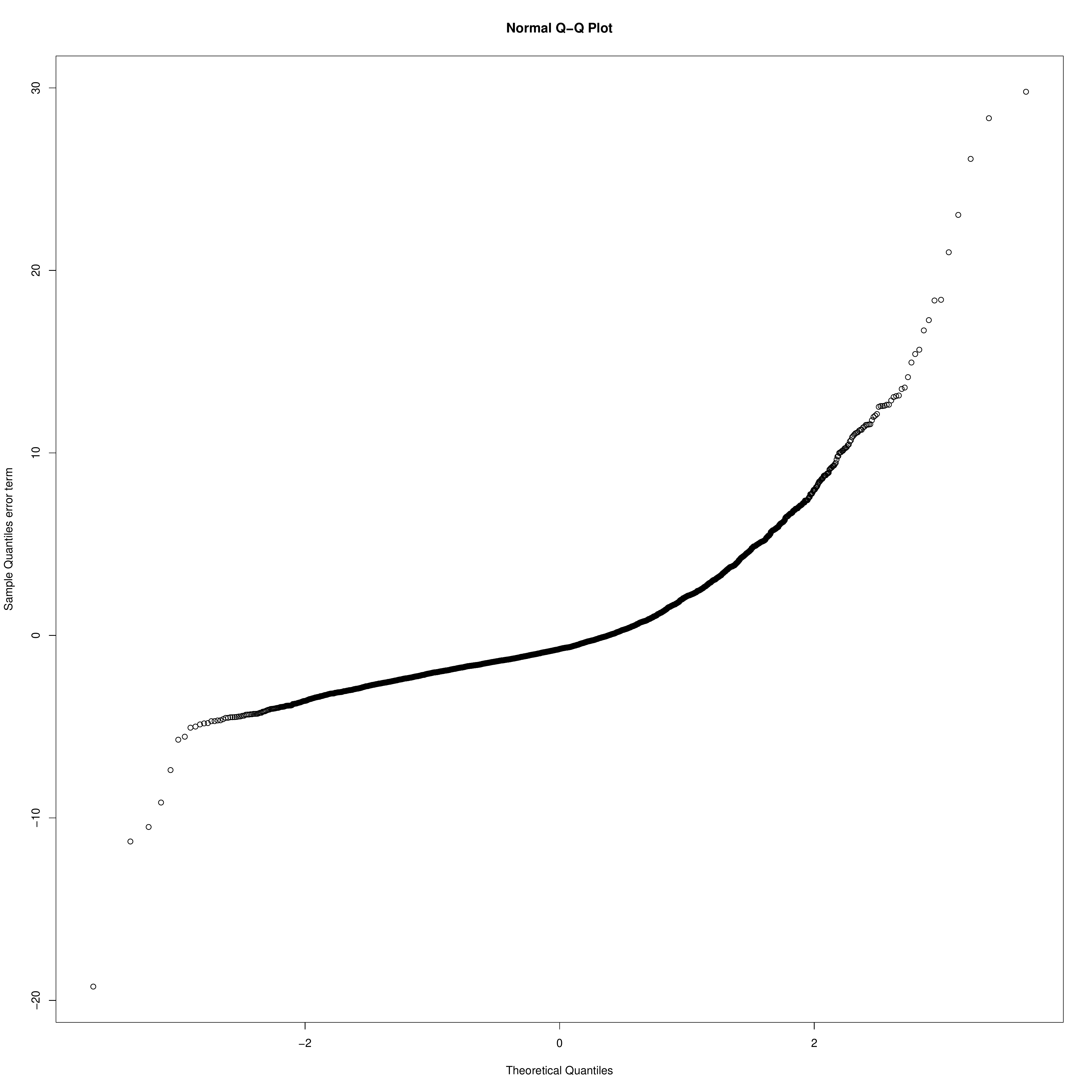}
\caption{Normal QQ-plots for the EBLUPs $\hat{\alpha}_i$ and estimated errors $\hat{e}_{ij}$ from the model (\ref{milkmodel}) fitted to the consumer expenditure on fresh milk products in 2002 data.}
\label{fig1}
\end{center}
\end{figure}


The normal QQ-plot of the EBLUPs in Figure \ref{fig1} shows that the Group 3 states have extreme EBLUPs and the Group 2 states have EBLUPs in the tails of the distribution but these are mixed in with some EBLUPs from Group 1 states.  This mixing suggests that the EBLUPS alone do not identify the group to which a state belongs.   Another potentially important value is the sample size $n_i$ in each state.  Figure \ref{fig2} shows the population standardised EBLUPS $\hat{\alpha}_i/\hat{\sigma}_{\alpha}$ plotted against the sample size $n_i$ for each state; these variables are also included in Table \ref{tab6}.  States plotted in the top and bottom left of the plot (high standardised EBLUPs and small to moderate sample size) are in Groups 2 and 3; states at the top or bottom right (small standardised EBLUPs or large standardised EBLUPS with large sample sizes) are in Group 1 with all the others states.  Specifically, states 6 ($n_6=99$), 12 ($n_{12}=71$) and 42 ($n_{42}=65$)) have relatively extreme EBLUPs but larger sample sizes so these states are in Group 1.  This suggests that both the magnitude of the standardised EBLUPs and the sample size determine the difficulty of estimating a particular state. 

\begin{figure}[htbp]
\begin{center}
\includegraphics[width=0.5\linewidth, height=5cm]{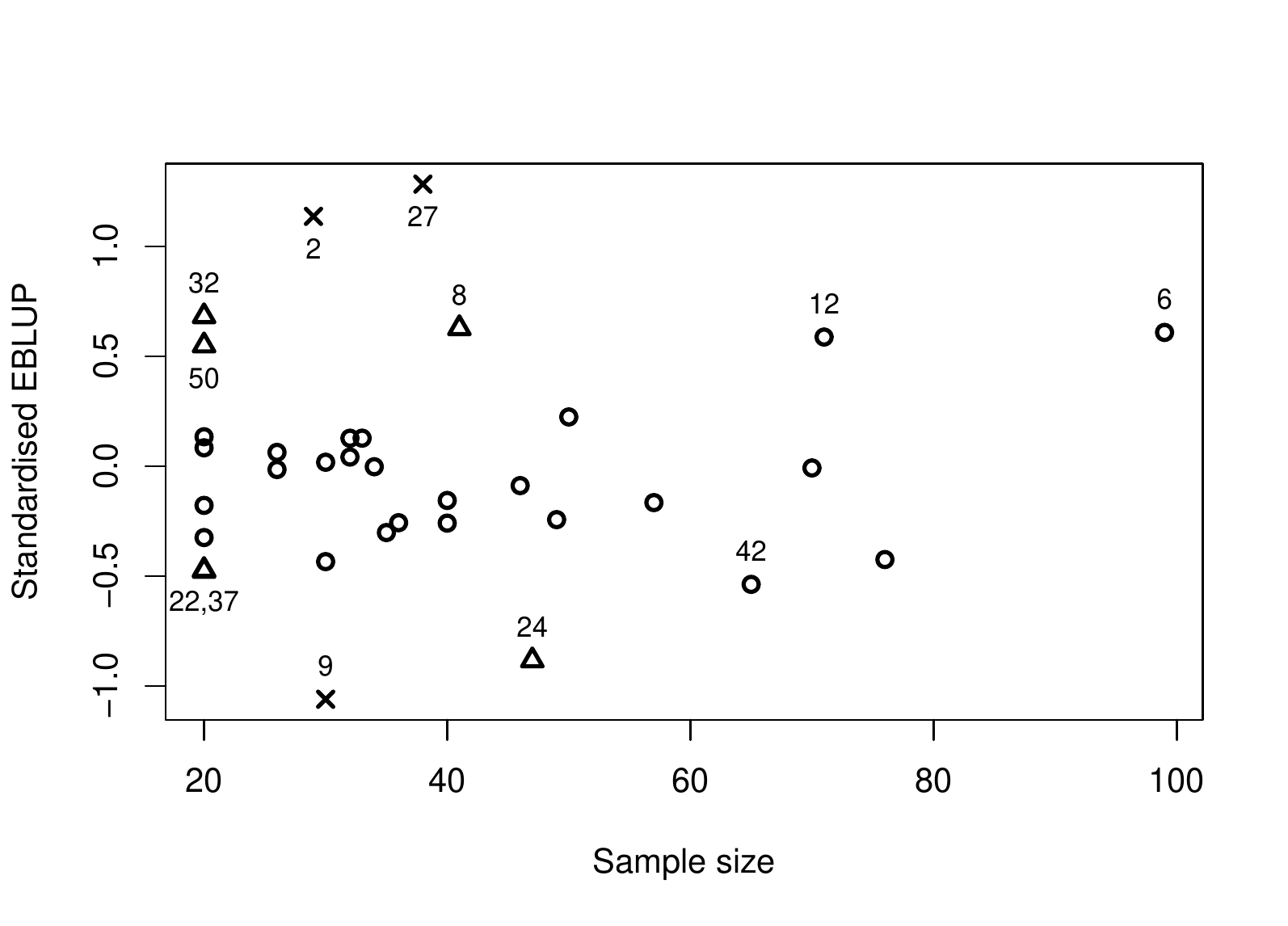}
\caption{Plot of the population standardised EBLUPS $\hat{\alpha}_i/\hat{\sigma}_{\alpha}$ against sample size in each state for the consumer expenditure on fresh milk products. Group 1 states are plotted as circles, Group 2 as triangles and Group 3 as crosses. Selected states are labelled by state number.}
\label{fig2}
\end{center}
\end{figure}

Another possibility we considered is that the within state variances may not be constant, either because of lack of fit in the mean in some states or because the within state variance simply differs in different states.  Incorrectly assuming constant variance as we have done in model (\ref{milkmodel}) would mean that the LW estimator is too small in states with large within state variance, so the intervals should perform poorly in these states.  We computed the sample variance of the errors in each state to produce rough estimates of the within state variances.  These variances range from $1.48$ to $20.46$.  In Group 3, states 2 and 27 have large  variances, but state 9 does not.  In Group 2, states 8, 32 and 50 have larger error variances (group 8 has the larges variance of all the states), but states 22, 24 and 37 actually have very small error variances.  States 18 and 36 in Group 1 have variances comparable to the larger variances in Group 2 while the majority of states have small variances.  The patterns are less consistent than those based on the standardised EBLUPs and sample size so we focussed more on these variables.

We cannot rule out the possibility that the results of the design-based simulation using the consumer expenditure on fresh milk products data are at least in part due to failures of the model (\ref{milkmodel}).  However, we can  explore whether similar results occur when the model is correct by carrying out a set of additional design-based simulations. We used the same 48 settings as in our model-based simulation, but we generated a single population for each setting and then selected $1000$ samples from it by simple random sampling without replacement.  We fitted the model and computed the same $95\%$ prediction intervals as before and evaluated their design-based properties.  The full set of results are available in the Supplementary material; we present and discuss illustrative cases below.

\begin{table}[!h]
\captionsetup{width=1\textwidth }
\centering
\caption{Simulated design-coverage and design-average length of confidence intervals when $\alpha_{i}$ and $e_{ij}$ have  normal distributions with variances $\dsiga^2=4$ and $\dsige^2=100$, respectively.  $^*$ identifies states in Group 3 and $^\dagger$ identifies states in Group 2.}\label{tab3}
\begin{tabular}{ccccccccccccc}
\toprule
\multicolumn{4}{c}{Method}                                  &  & \multicolumn{2}{c}{Sam-LW} &  & \multicolumn{2}{c}{Clp-LW} &  & \multicolumn{2}{c}{Clp-PR} \\\cline{1-4}\cline{6-7} \cline{9-10}\cline{12-13}
Area & N\_i & n\_i & $\hat{\alpha}_i/\hat\sigma_{\alpha}$ &  & Cvge         & Rlen        &  & Cvge         & Rlen        &  & Cvge         & Rlen        \\\hline
1    & 40   & 20   & 0.37                                   &  & 0.993 & 0.30  &  & 0.999 & 0.27 &  & 1.000     & 0.65 \\
2   & 40   & 20   & 0.11                                   &  & 0.997 & 0.59 &  & 1.000     & 1.25 &  & 1.000     & 1.93 \\
3   & 43   & 20   & 0.12                                   &  & 0.971 & 0.15 &  & 1.000     & 0.69 &  & 1.000     & 1.13 \\
4   & 50   & 25   & 0.48                                   &  & 0.980  & 0.21 &  & 0.991 & 0.10  &  & 0.999 & 0.53 \\
5    & 61   & 30   & 0.24                                   &  & 0.994 & 0.36 &  & 1.000     & 0.61 &  & 1.000     & 1.32 \\
6   & 72   & 36   & 0.37                                   &  & 0.981 & 0.25 &  & 0.992 & 0.34 &  & 1.000     & 1.02 \\
7   & 74   & 37   & 0.51                                   &  & 0.979 & 0.09 &  & 0.987 & 0.09 &  & 0.999 & 0.65 \\
8$^*$   & 79   & 40   & -1.45                                  &  & 0.860  & 0.25 &  & 0.554 & 0.50  &  & 0.901 & 0.22 \\
9   & 81   & 40   & -0.60                                   &  & 0.978 & 0.14 &  & 0.981 & 0.06 &  & 1.000     & 0.62 \\
10$^\dagger$   & 88   & 44   & 1.03                 &  & 0.925 & 0.07 &  & 0.847 & 0.30  &  & 0.989 & 0.10  \\
11   & 89   & 44   & 0.71                                   &  & 0.982 & 0.16 &  & 0.978 & 0.03 &  & 0.999 & 0.50  \\
12   & 92   & 46   & -0.43                                  &  & 0.992 & 0.35 &  & 0.996 & 0.32 &  & 1.000     & 1.08 \\
13   & 96   & 48   & -0.61                                  &  & 0.962 & 0.04 &  & 0.971 & 0.01 &  & 1.000     & 0.59 \\
14   & 101  & 25   & 0.14                                   &  & 0.997 & 0.49 &  & 0.999 & 0.79 &  & 1.000     & 1.03 \\
15$^*$   & 105  & 26   & 1.13                                   &  & 0.909 & 0.19 &  & 0.831 & 0.32 &  & 0.880  & 0.22 \\
16   & 113  & 28   & 0.26                                   &  & 0.992 & 0.32 &  & 0.995 & 0.41 &  & 0.999 & 0.65     \\
17   & 127  & 32   & 0.07                                   &  & 0.995 & 0.44 &  & 0.999 & 0.71 &  & 1.000     & 1.05 \\
18   & 129  & 32   & 0.54                                   &  & 0.986 & 0.23 &  & 0.989 & 0.18 &  & 0.999 & 0.42 \\
19    & 131  & 33   & -0.31                                  &  & 0.992 & 0.30  &  & 0.996 & 0.41 &  & 1.000     & 0.71 \\
20$^\dagger$   & 138  & 34   & -0.91                 &  & 0.934 & 0.08 &  & 0.914 & 0.18 &  & 0.962 & 0.01 \\
21    & 152  & 38   & 0.15                                   &  & 0.996 & 0.47 &  & 0.998 & 0.66 &  & 1.000     & 1.06 \\
22$^\dagger$   & 162  & 40   & -1.25                 &  & 0.927 & 0.16 &  & 0.854 & 0.30  &  & 0.950  & 0.12 \\
23    & 169  & 42   & -0.74                                  &  & 0.969 & 0.07 &  & 0.971 & 0.01 &  & 0.992 & 0.28 \\
24    & 170  & 42   & -0.01                                  &  & 0.987 & 0.36 &  & 0.996 & 0.55 &  & 1.000     & 0.96 \\
25    & 173  & 43   & -0.11                                  &  & 0.995 & 0.47 &  & 0.997 & 0.67 &  & 1.000     & 1.12 \\
26   & 177  & 44   & 0.78                                   &  & 0.965 & 0.05 &  & 0.966 & 0.01 &  & 0.996 & 0.28 \\
27   & 186  & 46   & -0.08                                  &  & 0.991 & 0.34 &  & 0.997 & 0.51 &  & 1.000     & 0.93 \\
28    & 190  & 48   & 0.09                                   &  & 0.997 & 0.42 &  & 0.999 & 0.59 &  & 1.000     & 1.05 \\
29    & 197  & 49   & -0.84                                  &  & 0.968 & 0.04 &  & 0.962 & 0.03 &  & 0.992 & 0.25 \\
30   & 199  & 50   & 0.23                                   &  & 0.989 & 0.25 &  & 0.991 & 0.34 &  & 1.000     & 0.74 \\
\bottomrule
\end{tabular}
\end{table}

\begin{table}
\captionsetup{width=1\textwidth }
\centering
\caption{Simulated design-coverage and design-average length of confidence intervals when $\alpha_{i}$ has a mixture distribution and $e_{ij}$ has a  normal distribution with variances $\dsiga^2=4$ and $\dsige^2=100$, respectively.  $^*$ identifies states in Group 3 and $^\dagger$ identifies states in Group 2.}\label{tab4}
\begin{tabular}{ccccccccccccc}
\toprule
\multicolumn{4}{c}{Method}                                  &  & \multicolumn{2}{c}{Sam-LW} &  & \multicolumn{2}{c}{Clp-LW} &  & \multicolumn{2}{c}{Clp-PR} \\\cline{1-4}\cline{6-7} \cline{9-10}\cline{12-13}
Area & N\_i & n\_i & $\hat{\alpha}_i/\hat\sigma_{\alpha}$ &  & Cvge         & Rlen        &  & Cvge         & Rlen        &  & Cvge         & Rlen        \\\hline
1$^*$    & 40   & 20   & -1.27                                  &  & 0.883 & 0.27 &  & 0.205 & 0.59 &  & 0.509 & 0.5  \\
2   & 44   & 20   & 0.26                                   &  & 0.997 & 0.49 &  & 1.000     & 0.65 &  & 0.999 & 0.91 \\
3   & 54   & 27   & 0.03                                   &  & 0.996 & 0.50  &  & 1.000     & 1.13 &  & 1.000     & 1.86 \\
4    & 58   & 29   & -0.24                                  &  & 0.976 & 0.16 &  & 0.998 & 0.49 &  & 1.000     & 1.04 \\
5   & 68   & 34   & 0.35                                   &  & 0.978 & 0.13 &  & 0.994 & 0.24 &  & 0.999 & 0.78 \\
6   & 80   & 40   & 0.15                                   &  & 0.987 & 0.22 &  & 0.998 & 0.56 &  & 1.000     & 1.31 \\
7   & 80   & 40   & -0.68                                  &  & 0.958 & 0.03 &  & 0.948 & 0.10  &  & 0.996 & 0.34 \\
8$^\dagger$    & 91   & 46   & -0.99                  &  & 0.971 & 0.04 &  & 0.872 & 0.28 &  & 0.987 & 0.10  \\
9$^\dagger$   & 95   & 48   & 0.91                  &  & 0.963 & 0.01 &  & 0.887 & 0.24 &  & 0.989 & 0.17 \\
10$^\dagger$   & 96   & 48   & 1.04                  &  & 0.920  & 0.14 &  & 0.794 & 0.34 &  & 0.967 & 0.01 \\
11   & 101  & 25   & 0.34                                   &  & 0.999 & 0.45 &  & 1.000     & 0.48 &  & 0.999 & 0.59 \\
12    & 104  & 26   & 0.60                                    &  & 0.997 & 0.23 &  & 0.999 & 0.11 &  & 0.995 & 0.20  \\
13   & 109  & 27   & -0.15                                  &  & 0.991 & 0.33 &  & 0.997 & 0.56 &  & 0.999 & 0.70  \\
14   & 109  & 27   & -0.08                                  &  & 0.994 & 0.48 &  & 0.997 & 0.83 &  & 1.000     & 1.00    \\
15    & 111  & 28   & -0.16                                  &  & 0.995 & 0.38 &  & 1.000     & 0.56 &  & 1.000     & 0.73 \\
16    & 112  & 28   & -0.38                               &  & 1.000     & 0.54 &  & 1.000     & 0.55 &  & 1.000     & 0.71 \\
17   & 117  & 29   & 0.30                                    &  & 0.994 & 0.34 &  & 1.000     & 0.43 &  & 0.999 & 0.60  \\
18   & 119  & 30   & 0.28                                   &  & 0.995 & 0.42 &  & 0.997 & 0.55 &  & 0.999 & 0.75 \\
19   & 123  & 31   & -0.14                                  &  & 0.995 & 0.59 &  & 0.999 & 0.89 &  & 1.000     & 1.16 \\
20   & 136  & 34   & 0.51                                   &  & 0.998 & 0.36 &  & 0.999 & 0.31 &  & 0.999 & 0.52 \\
21   & 137  & 34   & -0.39                                  &  & 0.994 & 0.46 &  & 0.997 & 0.51 &  & 0.999 & 0.76 \\
22    & 147  & 37   & 0.55                                   &  & 0.991 & 0.27 &  & 0.996 & 0.24 &  & 0.998 & 0.48 \\
23    & 159  & 40   & 0.19                                   &  & 0.995 & 0.43 &  & 0.998 & 0.63 &  & 1.000     & 0.98 \\
24$^*$   & 189  & 47   & -1.15                                  &  & 0.905 & 0.18 &  & 0.852 & 0.3  &  & 0.938 & 0.12 \\
25   & 190  & 48   & -0.02                                  &  & 0.991 & 0.39 &  & 0.998 & 0.55 &  & 1.000     & 0.95 \\
26   & 194  & 48   & 0.15                                   &  & 0.988 & 0.33 &  & 0.996 & 0.51 &  & 1.000     & 0.89    \\
27   & 196  & 49   & -0.98                                  &  & 0.959 & 0.03 &  & 0.934 & 0.15 &  & 0.980  & 0.08 \\
28   & 199  & 50   & 0.62                                   &  & 0.986 & 0.19 &  & 0.991 & 0.16 &  & 0.999 & 0.46 \\
29   & 199  & 50   & 0.39                                   &  & 0.989 & 0.31 &  & 0.993 & 0.36 &  & 0.999 & 0.73 \\
30    & 200  & 50   & -0.08                                  &  & 0.997 & 0.37 &  & 0.998 & 0.58 &  & 1.000     & 0.99 \\
\bottomrule
\end{tabular}
\end{table}

Tables \ref{tab3} and \ref{tab4} show the empirical design-coverage and the design-average relative length of the intervals for the settings with variances $\dsiga^2=4$  and $\dsige^2=100$ when $e_{ij}$ has a normal distribution and $\alpha_{i}$ has either a normal distribution or a mixture distribution; settings with non-normal $e_{ij}$ are included in the supplementary material.  The areas are again presented and labeled in order of increasing size.  The  Monte Carlo standard errors for the design-coverage probabilities are approximately less than 0.01.
		
In all settings, the model holds and the within cluster variances are constant.  Nonetheless, when $\sigma_e^2/\sigma_a^2$ is large (as in Tables \ref{tab3} and \ref{tab4}), the kind of results we saw in the previous simulation occur.  In Table \ref{tab3}, there are 2 Group 3 areas, numbers 8 ($\hat{\alpha}_{8}/\hat\sigma_{\alpha}= -1.45$,\, $n_{8}=40$) and  15 ($\hat{\alpha}_{15}/\hat\sigma_{\alpha}= 1.13$,\, $n_{15}=26$), and 3 Group 2 areas,  numbers 10 ($\hat{\alpha}_{10}/\hat\sigma_{\alpha}= 1.03$,\, $n_{10}=44$),\, 20 ($\hat{\alpha}_{20}/\hat\sigma_{\alpha}= -0.91$, and $n_{20}=34$), 22 ($\hat{\alpha}_{22}/\hat\sigma_{\alpha}= -1.25$,\, $n_{22}=40$).
In Table \ref{tab4}, there are 2  Group 3 areas, numbers 1 ($\hat{\alpha}_{1}/\hat\sigma_{\alpha}= -1.27$,\, $n_{1}=20$), 24 ($\hat{\alpha}_{24}/\hat\sigma_{\alpha}= -1.15$,\, $n_{24}=47$), and 3 Group 2 areas, numbers 8 ($\hat{\alpha}_8/\hat\sigma_{\alpha}= -0.99$,\, $n_8=46$), 9 ($\hat{\alpha}_{9}/\hat\sigma_{\alpha}= 0.91$,\, $n_{9}=48$) and 10 ($\hat{\alpha}_{10}/\hat\sigma_{\alpha}= 1.04$,\, $n_{10}=48$).  On the other hand, when $\dsige^2/\dsiga^2$ is not large, there are no Group 2 or 3 areas.  Overall, we see that when $\dsige^2/\dsiga^2$ is large, areas with extreme EBLUPs and small to moderate sample sizes are more difficult than other areas to estimate well inthe design-based framework.

Why does the size of the random effect for an area matter in the design-based framework but not in the model-based framework?  In the model-based framework, the population and hence the random effect for an area is generated anew for each replication.  This means that in each sample we are estimating a realisation of an independent and identically distributed random variable so, over model-based replications, we are estimating the expected value of the random variable which is zero under the model.  
In the design-based framework, the population is fixed and the replication is over independent samples from this fixed population. Once generated, the random effects are fixed.  This means that in areas with extreme random effects, the EBLUPs are estimating the expected values of extreme order statistics which are not zero and difficult to estimate.  This then flows through into estimating the area mean of the survey variable for areas with large random effects.  In our design-based simulation, we used the population EBLUPs to assess the difficulty in estimation, but in practice, as shown in \cite{lyu2021asymptotics}, we would use the sample EBLUPs to estimate the population random effects.

The above discussion suggests that when we want to achieve good (model-assisted) design-based rather than model-based performance,  we should treat the random effects in the model as fixed.  
To check this intuition (and indirectly confirm the argument above), we repeated our design-based simulations treating $\alpha_i$ as fixed and examined the empirical design coverage and the relative design-expected length of the model-based prediction intervals constructed under the fixed area effects model. 
Explicitly, the general form of the approach is to rewrite the model (\ref{nerm}) and (\ref{mean}) with fixed area effects as a regression model
\begin{equation}\label{fixed}
\yij=\sz_{ij}^{*T}\boldsymbol\chi +\eij, \qquad\text{for $ j=1,\ldots,N_i, \, i=1,\ldots,g$,}
\end{equation}
where  $\sz_{ij}^{*}=[\su_i^T, \sx_{ij}^{(w)T},\sv_i^T]^T$ with $\sv_i$ a $g$-vector of zeros with a one in position $i$, and $\boldsymbol\chi=[\bxi^T,\bbeta_{2}^T,\alpha_1,\ldots,\alpha_g]^T$ with $\sum_{i=1}^g \alpha_i=0$.
The normal maximum likelihood estimator $\hat{\boldsymbol\chi}$ of $\boldsymbol\chi$ is the (constrained) ordinary least squares estimator  which can be computed using \texttt{lm} in R with the sum to zero constraint on the $\alpha_i$.  The optimal model-based predictor of $\bar{y}_i$ under (\ref{fixed}) is the composite estimator 
\begin{displaymath}
		\bar{y}_i^{\text{com-fixed}}=(1-k_i)\bar{y}_{i(s)}+k_i \bar{\sz}_{i(r)}^{*T}\hat{\boldsymbol\chi},
\end{displaymath}
where $\bar{\sz}_{i(r)}^{*}=[\su_i^T,\bar{\sx}_{i(r)}^{(w)T},\boldsymbol{\nu}_i^T]^T$ and an approximate $100(1-\epsilon)\%$ prediction interval for $\bar{y}_i$ is 
\[
\Big[\bar{y}_i^{\text{com-fixed}} - \Phi^{-1}(1-\epsilon/2) k_i\left\lbrace \frac{\hsige^2}{N_i-n_i}+\bar{\sz}_{i(r)}^{*T}\hat\bv_{\boldsymbol\chi}\bar{\sz}_{i(r)}^*\right\rbrace^{1/2},\, \bar{y}_i^{\text{com-fixed}} + \Phi^{-1}(1-\epsilon/2) k_i\left\lbrace \frac{\hsige^2}{N_i-n_i}+\bar{\sz}_{i(r)}^{*T}\hat\bv_{\boldsymbol\chi}\bar{\sz}_{i(r)}^*\right\rbrace^{1/2}\Big],
\]
where $\hsige^2$ estimates $\sigma_e^2$, and $\hat\bv_{\boldsymbol\chi}$ estimates the variance of $\hat{\boldsymbol\chi}$.
Similarly,  the synthetic estimator is $\bar{y}^{\text{syn-fixed}}_i=\bar{\sz}_{i}^{*T}\boldsymbol{\chi}$, where $\bar{\sz}_{i}^{*}=[\su_i^T,\bar{\sx}_{i}^{(w)T},\boldsymbol{\nu}_i^T]^T$, and an approximate $100(1-\epsilon)\%$ prediction interval for $\bar{y}_i$ is 
\[
\Big[\bar{y}_i^{\text{com-fixed}} - \Phi^{-1}(1-\epsilon/2) (\bar{\sz}_{i}^{*T}\hat\bv_{\boldsymbol\chi}\bar{\sz}_{i})^{1/2},\, \bar{y}_i^{\text{com-fixed}} + \Phi^{-1}(1-\epsilon/2)(\bar{\sz}_{i}^{*T}\hat\bv_{\boldsymbol\chi}\bar{\sz}_{i})^{1/2}\Big].
\]

The full set of results is included in the Supplementary material. The design-coverage results for both methods are generally good; the relative design-expected lengths of the intervals for the composite approach are mostly smaller than those for the synthetic approach.

\section{Discussion }\label{sec:discussion}

In this paper, we considered model-based small area estimation under the nested error regression model.  We discussed two target characteristics of interest, the small area means and the conditional linear predictors of the small area means, and the construction of mixed model estimators (EBLUPs) of these two targets.  We established asymptotic linearity results and central limit theorems for these estimators which allow us to establish asymptotic equivalences between estimators,  to approximate their sampling distributions, obtain simple expressions for and construct simple estimators of their asymptotic mean squared errors, and justify asymptotic prediction intervals.  Our new results are established under the asymptotic framework of increasing numbers of small areas and increasing numbers of units in each area, a framework that has not previously been applied in small area estimation.  We report model-based simulations that show that these results are applicable in quite small, finite samples, establishing that they fill important theoretical  gaps and are useful in practice.  In particular, our mean squared error estimator performs as well or better than the widely-used \cite{prasad1990estimation} estimator and is much simpler, so it is easier to interpret and consequently provides more insight.   We also carried out a design-based simulation using real data on consumer expenditure on fresh milk products to explore the design-based properties of the mixed model estimators.  This simulation produced some surprising results which we managed to explain and interpret through analysis of the population and further design-based simulations.  The simulations together highlight under-appreciated differences between the model- and design-based properties of mixed model estimators in small area estimation.

Given the extensive literature on small area estimation, it is important to acknowledge that in this paper we have considered only one of many interesting and important scenarios.  Future work should include applying the asymptotic approach to area level models, outlier robust estimators and the extensions to the basic nested error regression model discussed in the Introduction.

	\bibliographystyle{agsm}
	\bibliography{mythesisbib}
	
\end{document}